\newcommand{\Geqt}{\ensuremath{G_\text{eqt}}}
\begin{document}

% first the title is needed
\title{Leader Election and Shape Formation with Self-Organizing Programmable Matter}

% a short form should be given in case it is too long for the running head
\titlerunning{Leader Election and Shape Formation}

% the name(s) of the author(s) follow(s) next
%
% NB: Chinese authors should write their first names(s) in front of
% their surnames. This ensures that the names appear correctly in
% the running heads and the author index.
%
\author{
  Joshua J. Daymude\inst{1}%\thanks{Supported in part by the NSF under Awards CCF-1353089 and CCF-1422603.}
  \and Zahra Derakhshandeh\inst{1} \and
  Robert Gmyr\inst{2}%\thanks{Supported in part by DFG grant SCHE 1592/3-1.} \and
  Thim Strothmann\inst{2} \and
  Rida Bazzi\inst{1} \and
  Andr\'ea W.\ Richa\inst{1} \and
  Christian Scheideler\inst{2}
}

\authorrunning{Daymude et al.}

\institute{Department of Computer Science and Engineering,\\
    Arizona State University, USA, \\
   \email{\{jdaymude,zderakhs,bazzi,aricha\}@asu.edu }
   \and
   Department of Computer Science,\\
    University of Paderborn, Germany, \\
   \email{\{gmyr,thim\}@mail.upb.de, scheideler@upb.de}}
   
%\institute{J. Daymude \and Z. Derakhshandeh \and R. Bazzi \and A. Richa \at
%   Computer Science, CIDSE, Arizona State University, USA, \\
%   \email{\{jdaymude,zderakhs,bazzi,aricha\}@asu.edu }
%   \and
%   R. Gmyr \and T. Strothmann \and C. Scheideler \at
%   Department of Computer Science,\\
%    Paderborn University, Germany, \\
%   \email{\{gmyr,thim\}@mail.upb.de, scheideler@upb.de}}

%
%

\maketitle

\begin{abstract}
  We consider programmable matter consisting of
  simple computational elements, called {\em particles}, that can establish and release
  bonds and can actively move in a self-organized way, and we investigate
  the feasibility of solving fundamental problems relevant for programmable matter.
  As a suitable model for such self-organizing particle systems,
  we will use a generalization of the geometric amoebot model first proposed
  in SPAA 2014.
  Based on the geometric model, we present efficient local-control algorithms for leader
  election and line formation requiring only particles with constant size
  memory, and we also discuss the limitations of solving these problems within
  the general amoebot model.
\end{abstract}

\section{Introduction}

A central problem for programmable matter is shape formation, and various
solutions have already been discovered for that problem using different approaches
like DNA tiles \cite{patitz2014}, moteins \cite{CDBG11}, or nubots
\cite{winfree13}. We are studying shape formation using the amoebot model
that was first proposed in~\cite{spaa-ba14}. In order to determine how
decentralized shape formation can be handled, we are particularly interested
in the connection between leader election and shape formation. In the leader
election problem we are given a set of particles, and the problem is to select
one of the particles as the leader. Many problems like the consensus problem (all
particles have to agree on some output value) can easily be solved once the
leader election problem can be solved. The same has also been observed for
shape formation, as most shape formation algorithms depend on some seed
element. However, the question is whether shape formation can even be solved
in circumstances where leader election is not possible. The aim of this paper
is to shed some light on the dependency between leader election and shape
formation: On one hand, we present an efficient decentralized algorithm for solving leader election in a geometric variant of the amoebot model and show how having a leader leads to an efficient solution for the basic line formation formation problem; on the other hand, we show that both problems cannot be solved efficiently under the general version of our model .
Before we present our results in more detail, we first give a formal definition of the model
and the problems we study in this paper.

\subsection{Models} 
\label{sec:model}

We use two models throughout this work. Firstly, we consider a generalization
of the amoebot model~\cite{spaa-ba14} which abstracts from any geometry
information. We call this model the \emph{general amoebot model}. Secondly, we
consider a model that is essentially equivalent to the original amoebot model presented in~\cite{spaa-ba14}
but is defined based on the general amoebot model. We refer to this second
model as the \emph{geometric amoebot model}.

In the \emph{general amoebot model}, programmable matter consists of a uniform
set of simple computational units called particles that can move and bond to
other particles and use these bonds to exchange information. The particles act
asynchronously and they achieve locomotion by expanding and contracting, which
resembles the behavior of amoeba.

As a base of this model, we assume that we have a set of particles that aim at
maintaining a connected structure at all times. This is needed to prevent the
particles from drifting apart in an uncontrolled manner like in fluids and
because in our case particles communicate only via bonds. The shape and
positions of the bonds of the particles mandate that they can only assume
discrete positions in the particle structure. This justifies the use of a
possibly infinite, undirected graph $G=(V,E)$, where $V$ represents all
possible positions of a particle (relative to the other particles in their
structure) and $E$ represents all possible transitions between positions.

Each particle occupies either a single node or a pair of adjacent nodes in
$G$, i.e., it can be in two different {\em shapes}, and every node can be
occupied by at most one particle. Two particles occupying adjacent nodes are
\emph{connected}, and we refer to such particles as \emph{neighbors}.
Particles are \emph{anonymous} but the bonds of each particle have unique
labels, which implies that a particle can uniquely identify each of its
outgoing edges. Each particle has a local memory, and any pair of connected
particles has a shared memory that can be read and written by both particles.

Particles move by \emph{expansions} and \emph{contractions}: If a
particle occupies one node (i.e., it is \emph{contracted}), it can expand to
an unoccupied adjacent node to occupy two nodes. If a particle occupies two
nodes (i.e., it is \emph{expanded}), it can contract to one of these nodes to
occupy only a single node. Performing movements via expansions and
contractions has various advantages. For example, it would easily allow a
particle to abort a movement if its movement is in conflict with other
movements. A particle always knows whether it is contracted or expanded and this
information will be available to neighboring particles. In a \emph{handover},
two scenarios are possible: a) a contracted particle $p$ can "push" a
neighboring expanded particle $q$ and expand into the neighboring node
previously occupied by $q$, forcing $q$ to contract, or b) an expanded
particle $p$ can "pull" a neighboring contracted particle $q$ to a cell
occupied by it thereby expanding that particle to that cell, which allows $p$
to contract to its other cell. The ability to use a handover allows the system
to stay connected while particles move (e.g., for particles moving in a
worm-like fashion). Note that while expansions and contractions may represent
the way particles physically move in space,
they can also be interpreted  as a particle "looking ahead" and
establishing new logical connections (by expanding) before it fully moves to a new
position and severs the old connections it had (by contracting).

Summing up over all assumptions above, the {\em state} of a particle is
uniquely determined by its shape, the contents of its local memory, the edges
it has to neighboring particles, the contents of their shared memory (which
may allow a particle to obtain further information about the neighboring
particles beyond their shape), and finally the shape of the neighboring
particles. The {\em state of the particle system} (or short, {\em system
state}) is defined as the combination of all particle states.
We say a particle system in a system state in which the particle occupy a set of nodes $A \subseteq V$ is \emph{connected} if the graph $G |_A$ induced by $A$ is connected.
We assume the standard asynchronous computation model, i.e., only one particle can be active
at a time. Whenever a particle is active, it can perform an {\em action}
(governed by some fixed, finite size program controlling it) consisting of a
finite amount of computation (involving its local memory, the
shared memories with its neighboring particles, and random bits) followed by no or a single
movement. Hence, a {\em computation} of a particle system is a potentially
infinite sequence of actions $A_1, A_2, \ldots$ based on some initial system
state $s_0$, where action $A_i$ transforms system state $s_{i-1}$ into system
state $s_i$.
The (parallel) time complexity of a computation is usually
measured in {\em rounds}, where a round is over once every particle has
been given the chance to perform at least one action.

Let ${\cal S}$ be the set of all system states in which the particle system is
connected. In general, a {\em computational problem} $P$ for the particle
system is specified by a set ${\cal S}' \subseteq {\cal S}$ of permitted
initial system states and a mapping $F:{\cal S}' \rightarrow 2^{\cal S}$,
where $F(s) \subseteq {\cal S}$ determines the set of permitted {\em final}
states for any initial state $s \in {\cal S}'$. A particle system {\em solves}
problem $P=({\cal S}',F)$ if for any initial system state $s \in {\cal S'}$,
all computations of the particle system eventually reach a system state in
$F(s)$ without losing connectivity, and whenever such a system state is
reached for the first time, the system stays in $F(s)$.
If for all computation a final state is reached in which all
particles decided to halt (i.e., they decided not to perform any further
actions, irrespective of future events), then the particle system is also said
to {\em decide} problem $P$. Note that being in a final state does not
necessarily mean that all particles decided to halt.
If ${\cal S}'={\cal S}$, so {\em any} initial state is permitted (including
arbitrary faulty states, as long as the particle system is connected), then a
particle system solving $P$ is also said to be {\em self-stabilizing}. It is
well-known that in general a distributed system solving a problem $P$ cannot
decide it and also be self-stabilizing because if so, it would often be
possible to come up with an initial state $s$ where a member of the system
decides to halt prematurely, disallowing the system to eventually reach a
state in $F(s)$.

Besides the general amoebot model, we will also consider the \emph{geometric
amoebot model}. The geometric amoebot model is a specific variant of the
general amoebot model in which the underlying graph $G$ is defined to be the
equilateral triangular graph $\Geqt$ (see Figure~\ref{fig:graphAndBoundaries}),
and the bonds of the particles are labeled in a consecutive way in clockwise orientation
around a particle so that every particle has the same sense of clockwise
orientation. However, we do not assume that the labeling is uniform, so the particles do not necessarily share a common sense of direction in the grid.
\begin{figure}[htb]
  \includegraphics{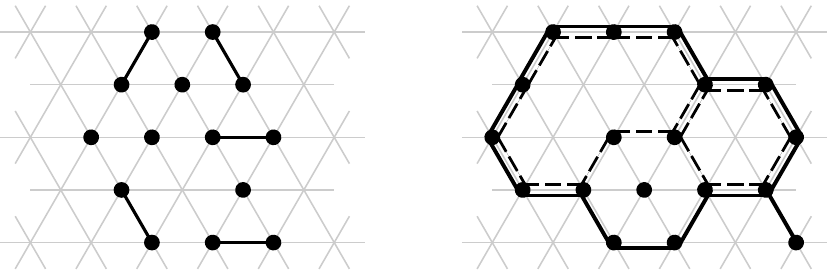}
  \caption{
    The left part shows an example of a particle structure in the geometric amoebot model.
    A contracted particle is depicted as a black dot,
    and an expanded particle is depicted as two black dots connected by an edge.
    The right part shows a particle structure with 3 boundaries.
    The outer boundary is shown as a solid line and the two inner boundaries are shown as dashed lines.
  }
  \label{fig:graphAndBoundaries}
\end{figure}

\subsection{Problems}

In this paper we consider the following two problems.
For both problems we define the set of initial system states as the set of all states such that the particle system is connected and all memories are empty.

For the \emph{leader election problem} the set of final system states contains
any state in which the particles form a connected structure and exactly one
particle is a leader (i.e., only this particle is in a leader state while the
remaining particles are in a non-leader state). Our goal will be to come up
with a distributed algorithm that allows a particle system to decide the
leader election problem. Note that the leader election problem is well defined
for both the general amoebot model and the geometric amoebot model.

In a \emph{shape formation problem}, the set of final states consists of those
system states where the particle structure forms the desired shape. As a
specific example of a shape formation problem, we consider the \emph{line
formation problem}. In the geometric amoebot model, the shape the particles
have to form is a straight line in the equilateral triangular grid and all particles have to be contracted in a final system state.
Of course, in the general amoebot model a straight line is not well-defined. Hence, for
this model the set of final states for the line formation problem is defined
to consist of all system states in which the particles form a simple path in
$G$.

Throughout the paper, we assume for the sake of simplicity that in an initial state all particles are contracted.
Our algorithms can easily be extended to dispose of this assumption.

\subsection{Our Contributions}

We focus on the problem of solving leader election and shape
formation for particles with {\em constant size memory}. For shape formation, we
just focus on the already mentioned line formation problem.

For the {\em geometric} amoebot model,
we show that there is a {\em distributed algorithm that can decide the leader
election problem} (Section~\ref{sec:geometric}), i.e., at the end we have exactly one leader and the leader
knows that it is the only leader left. Moreover, the runtime for our leader
election algorithm is {\em worst-case optimal} in the sense that it needs
{\em $\mathcal{O}(L_\text{max})$ rounds with high probability} (w.h.p.)\footnote{By {\em with high probability}, we mean with probability at least $1-1/n^c$, where $n$ is the
number of particles in the system and $c>0$ is a constant.}, where $L_\text{max}$ is the maximum length of a boundary
between the particle structure and an empty region (inside or outside of it)
in $\Geqt$.
Based on the leader election algorithm, we present a {\em distributed algorithm that solves the line formation problem} with \emph{worst-case optimal work} in Section~\ref{sec:line}.

On the other hand, for the {\em general} amoebot model, we show that {\em neither leader election nor
shape formation can be decided by any distributed algorithm} in Section~\ref{sec:impossibility}. More concretely, we show that there cannot exist a randomized algorithm for solving either problem with any bounded probability of success in the general model.

The algorithms presented for leader election and line formation under the geometric amoebot model both assume that
the system is in a well-initialized state.
It would certainly be desirable to
have algorithms that can tolerate any initial state, but at the end of the
paper, in Section~\ref{sec:ss}, we show that there are certain limitations to solving leader election
and line formation in a self-stabilizing fashion.

\subsection{Related Work}

Many approaches related to programmable matter have recently been proposed.
One can distinguish between active and passive systems. In passive systems the
particles either do not have any intelligence at all (but just move and bond
based on their structural properties or due to chemical interactions with the
environment), or they have limited computational capabilities but cannot
control their movements. Examples of research on \emph{passive systems} are
DNA computing \cite{Adl94,BDLS96,CDBG11,DPSS11,WLWS98}, tile self-assembly
systems in general  (e.g., see the surveys in~\cite{doty2012,patitz2014,Woods2013intrinsic}),
population protocols \cite{AAD+06}, and slime molds~\cite{BMV12,LTT+10}.
We will not describe these models in detail as they are only of little relevance for our approach.
On the other hand in \emph{active systems}, computational
particles can control the way they act and move in order to solve a specific task.
Robotic swarms and modular robotic systems are
some examples of active programmable matter systems.

In the area of \textit{swarm robotics} it is usually assumed that there is a
collection of autonomous robots that have limited sensing, often including
vision, and communication ranges, and that can freely move in a given area.
They follow a broad variety of goals: for example, graph exploration
(e.g.,~\cite{fl13}), gathering problems (e.g., \cite{AG3,ci12}), shape
formation problems (e.g.,~\cite{fl08,kilobots}), and to understand the global
effects of local behavior in natural swarms like social insects, birds, or
fish (e.g.,~\cite{DBLP:1211-1909,Cha09}). Surveys of recent results in
swarm robotics can be found in~\cite{Ker12,McL08}; other samples of
representative work can be found in~\cite{AR10,BFMS11,CP08,DFSY10,DS08,HABFM02,KM11}. While the
analytical techniques developed in the area of swarm robotics and natural
swarms are of some relevance for this work, the individual units in those
systems have more powerful communication and processing capabilities than in
the systems we consider.

The field of \textit{modular self-reconfigurable robotic systems} focuses on intra-robotic aspects
such as the design, fabrication, motion planning, and control of autonomous kinematic machines with variable morphology
(see e.g.,~\cite{FNKB88,YSS+07}).
\textit{Metamorphic robots}  form a subclass of self-reconfigurable robots that share some of the characteristics of our geometric model~\cite{Chi94}.
The hardware development in the field of self-reconfigurable robotics has been complemented
by a number of algorithmic advances (e.g.,~\cite{BKRT04,kilobots,WWA04}),
but so far
mechanisms that automatically scale from a few to hundreds or thousands of individual units are still under investigation,
and no rigorous theoretical foundation is available yet.

The \emph{nubot} model~\cite{chen2014fast,chen2013parallel,winfree13} by Woods et al.
aims at providing the theoretical
framework that would allow for
a more rigorous algorithmic study of biomolecular-inspired systems,
more specifically of self-assembly systems with active molecular
components. Although bio\-molecular-inspired systems share  similarities with our self-organizing particle systems, there are many differences that do not allow us to translate the algorithms and other results under the nubot model to our systems --- e.g.,
there is always an arbitrarily large supply of "extra" particles that can be added to the system as needed, and the system allows for an additional (non-local) notion of rigid-body movement.

Our developed leader election algorithm for the geometric amoebot model shares some similarities with the algorithm of~\cite{BeckersW01} for cellular automata. However, since cellular automata work in a synchronous fashion and have access to a global compass, the approach of~\cite{BeckersW01} is vastly different from our leader election algorithm.

\section{Impossibility Results in the General Amoebot Model}
%AR: As I talked to Chirstian on Skype, I put these results as a stand alone section here (that is probably my preference, but not a very strong one), but feel free to move it elsewhere if you prefer.
\label{sec:impossibility}
In this section, we show that both leader election and line formation are impossible to solve in the general amoebot model.
Suppose that
there is a distributed algorithm solving the line formation problem in the
general amoebot model (when starting in a well-initialized state).
Since in
this case it is possible to decide when $G|_{A'}$ forms a line, it is also
possible to design a protocol that solves the leader election problem:
once the line has been formed, its two endpoints contend for
leadership using tokens with random bits sent back and forth until one of them
wins. On the other hand, one can deduce from~\cite{DBLP:conf/focs/ItaiR81}
that in the general amoebot model there is no distributed algorithm that can
decide when a leader has been elected (with any reasonable success
probability).

More concretely, in~\cite{DBLP:conf/focs/ItaiR81} the authors
show that for a ring of anonymous nodes there is no algorithm that can
correctly decide the leader election problem\footnote{Or, in their words, that can
solve the leader election problem with distributive termination.} with any
probability $\alpha>0$, i.e., for any algorithm in which the particles are
guaranteed to halt, the error probability is unbounded. Since in the general
amoebot model $G$ can be any graph, we can set $G$ to be a ring whose size is
the number of particles and the result of~\cite{DBLP:conf/focs/ItaiR81} is
directly applicable.

Hence, there is no a distributed algorithm deciding
the line formation problem (with any reasonable success probability) in the
general amoebot model, and therefore not even an algorithm for solving it since a protocol solving the problem could easily be transformed into a protocol deciding it.

\section{Leader Election in the Geometric Amoebot Model}
\label{sec:geometric}

In this section we show how the leader election problem can be decided in the geometric amoebot model.
Our approach organizes the particle system into a set of cycles and executes
an algorithm on each cycle independently (Section~\ref{sec:organizationIntoCycles}).
For simplicity and ease of presentation we first state the protocol in a simple model in which particles have a global view of the cycle they are part of, act synchronously, and have unbounded local memory (Section~\ref{sec:algorithmLE}).
We prove its correctness in Section~\ref{sec.algoCorrectness}.
In Section~\ref{sec.algoLocalRealization}, we present the corresponding local-control protocol that works without these assumptions in the geometric amoebot model.
However, since the local realization combines many different token passing schemes and techniques, a formal analysis would be beyond the scope of this paper.
% we do not directly prove its runtime and correctness.
Instead, we show that our approach has expected linear runtime in a variant of the simple model which is motivated by the token passing schemes of the local-control protocol, i.e., it takes into account that interaction between two particles is dependent on the distance between those particles (Section~\ref{sec.algoRuntime}).
We conclude our analysis of the leader election problem by presenting an extension of our protocol (in the simpler model) in Section~\ref{sec:whp} with a linear runtime with high probability.

\subsection{Organization into Cycles}
\label{sec:organizationIntoCycles}
Let $A \subseteq V$ be any initial distribution of contracted particles such that $\Geqt|_A$ is connected.
Consider the graph $\Geqt|_{V \setminus A}$ induced by the unoccupied nodes in $\Geqt$.
We call a connected component of $\Geqt|_{V \setminus A}$ an \emph{empty region}.
Let $N(R)$ be the neighborhood of an empty region $R$ in $\Geqt$.
Then all nodes in $N(R)$ are occupied and we call the graph $\Geqt|_{N(R)}$ a \emph{boundary}.
Since $\Geqt|_A$ is a connected finite graph, exactly one empty region has infinite size while the remaining empty regions have finite size.
We define the boundary corresponding to the infinite empty region to be the unique \emph{outer boundary} and refer to a boundary that corresponds to a finite empty region as an \emph{inner boundary}, see Figure~\ref{fig:graphAndBoundaries}.

The particles occupying a boundary can instantly (i.e., without communication) organize themselves into a cycle using only local information:
Consider a boundary corresponding to an empty region $R$.
Let $p$ be a particle occupying a node $v$ of the boundary.
By definition there exists a non-occupied node $w \in R$ that is a adjacent to $v$ in the graph $\Geqt$.
The particle $p$ iterates over the neighboring nodes of $v$ in clockwise orientation around $v$ starting at $w$.
Consider the first occupied node it encounters; the particle occupying that node is the successor of $p$ in the cycle corresponding to that boundary.
Analogously, $p$ finds its predecessor in the cycle by traversing the neighborhood of $v$ in counter-clockwise orientation.

Note that a single particle can belong to up to three boundaries at once.
Furthermore, a particle cannot locally decide whether two empty regions it sees (i.e., maximal connected components of non-occupied nodes in the neighborhood of $v$) are distinct.
We circumvent these problems by letting a particle treat each empty region in its local view as distinct.
For each such empty region, a particle executes an independent instance of the same algorithm.
Hence, we say a particle acts as a number of (at most three)  distinct \emph{agents}.
For each of its agents a particle determines the predecessor and successor as described above.
This effectively connects the set of all agents into disjoint cycles as depicted in Figure~\ref{fig:agents}.
Observe that from a global perspective the cycle of the outer boundary is oriented clockwise while a cycle of an inner boundary is oriented counter-clockwise.
This is a direct consequence of the way the predecessors and successors of an agent are defined.

\begin{figure}[htb]
  \includegraphics{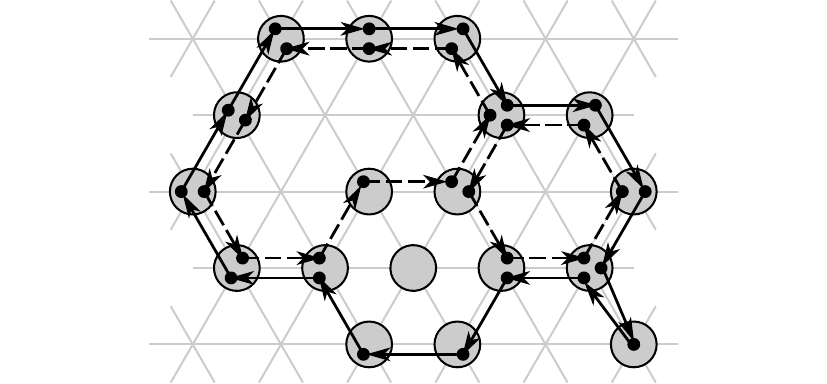}
  \caption{
    The depicted particle system is the same as in the right part of Figure~\ref{fig:graphAndBoundaries}.
    In this figure particles are depicted as gray circles.
    The black dots inside of a particle represent its agents.
    As in Figure~\ref{fig:graphAndBoundaries} the outer boundary is solid and the two inner boundaries are dashed.
  }
  \label{fig:agents}
\end{figure}

\subsection{Algorithm}
\label{sec:algorithmLE}
The leader election algorithm operates independently on each cycle.
At any given time, some subset of agents on a cycle will consider themselves \emph{candidates}, i.e. potential leaders of the system.
Initially, every agent considers itself a candidate.
Between any two candidates on a cycle there is a (possibly empty) sequence of non-candidate agents.
We call such a sequence a \emph{segment}. For a candidate $c$ we refer to the segment coming after   $c$ in the direction of the cycle as $seg(c)$ and refer to its length by $|seg(c)|$.
We refer to the candidate coming after $c$ as the \emph{succeeding candidate} ($succ(c)$) and to the candidate coming before $c$  as the \emph{preceding candidate} ($pred(c)$) (see Figure~\ref{fig:segment}).
We drop the $c$ in parentheses if it is clear from the context.
We define the distance $d(c_1, c_2)$ between candidates $c_1$ and $c_2$ as the number of agents between $c_1$ and $c_2$ when going from $c_1$ to $c_2$ in \emph{direction} of the cycle.
We say a candidate $c_1$ \emph{covers} a candidate $c_2$ (or $c_2$ \emph{is covered by} $c_1$) if $|seg(c_1)| > d(c_2, c_1)$ (see Figure~\ref{fig:segment}).
\begin{figure}[htb]
  \includegraphics{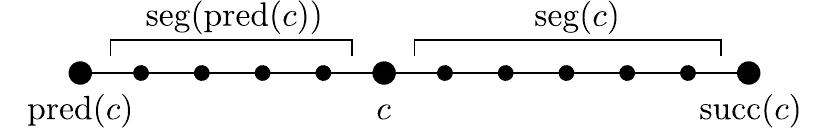}
  \caption{
  	The figure depicts a part of a cycle that is oriented to the right.
    Non-candidate agents are small black dots, candidates are bigger dots.
    The candidate $c$ covers $pred(c)$ since $|seg(c)| > d(pred(c), c)$.
  }
  \label{fig:segment}
\end{figure}
The leader election progresses in \emph{phases}.
In each phase, each candidate executes Algorithm~\ref{alg:geometricLE}.
A phase consists of three synchronized \emph{subphases}, i.e., agents can only progress to the next subphase once all agents have finished the current subphase.

\begin{algorithm}[htb]
  \caption{Leader Election for a Candidate $c$}
  \label{alg:geometricLE}
  \begin{algorithmic}
    \Statex \textbf{Subphase 1:}
    \State $pos$ $\leftarrow$ position of $succ(c)$
    \If{covered by any candidate or $|seg(c)| < |seg(pred(c))|$ }
      \State \Return{not leader}
    \EndIf
    \\
    \Statex \textbf{Subphase 2:}
    \If {coin flip results in heads}
    \State transfer candidacy to agent at $pos$
    \EndIf
    \\
    \Statex \textbf{Subphase 3:}
    \If{only candidate on boundary}
      \If{outside boundary}
        \State \Return{leader}
      \Else
        \State \Return{not leader}
      \EndIf
    \EndIf
  \end{algorithmic}
\end{algorithm}

Consider the execution of Algorithm~\ref{alg:geometricLE} by a candidate $c$.
If the algorithm returns "not leader" then $c$ revokes its candidacy and becomes part of a segment.
If the algorithm returns "leader", $c$ will become the leader of the particle system.
The transferal of candidacy in subphase 2 means that $c$ withdraws its own candidacy but at the same time promotes the agent at position $pos$ (i.e., $succ(c)$ in subphase 1) to be a candidate.
Once a candidate becomes a leader it broadcasts this information such that all particles can halt.

\subsection{Correctness}
\label{sec.algoCorrectness}
In order to show the correctness of our algorithm, we show that it satisfies the following conditions, that relate to the entire particle system (not just a single cycle):
\begin{enumerate}
  \item \emph{Safety}: There always exists at least one candidate.
  \item \emph{Liveness}: In each phase if there is more than one candidate, at least one candidate withdraws leadership with a probability that is bounded below by a positive constant.
\end{enumerate}

\begin{lemma}
  \label{lem:safety}
  Algorithm~\ref{alg:geometricLE} satisfies the safety condition.
\end{lemma}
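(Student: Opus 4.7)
The plan is to prove safety by maintaining, across every subphase, the invariant that the cycle corresponding to the outer boundary carries at least one candidate (or an already-elected leader). Because inner-boundary candidates can only ever decrease and the safety condition concerns the whole system, exhibiting a surviving outer-boundary candidate at every point in time suffices. The base case is immediate: in the initial configuration every agent, hence every outer-boundary agent, is a candidate.

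For the inductive step in subphase~1, the key step is to single out the candidate $c^*$ on the outer cycle whose segment length $|seg(c^*)|$ is maximum among all candidates on that cycle (ties broken arbitrarily). By maximality, $|seg(c^*)| \geq |seg(pred(c^*))|$, so the length-comparison clause cannot trigger withdrawal. For the coverage clause, note that when one walks from $c^*$ in the cycle direction to any other candidate $c'$, one must first traverse all of $seg(c^*)$ and then at least one more agent; this gives $d(c^*, c') \geq |seg(c^*)| \geq |seg(c')|$, contradicting the inequality $|seg(c') | > d(c^*, c')$ required for $c'$ to cover $c^*$. Hence $c^*$ survives subphase~1.

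For subphase~2 the cleanest viewpoint is to treat candidacy as a token handed from a candidate on the cycle to the agent occupying the recorded position $pos$, so that the total number of candidacy tokens on each cycle is conserved across the subphase; since at least one token is present at the start of subphase~2 by the previous step, at least one remains at the end. For subphase~3, only a sole candidate on a boundary takes action: on the outer boundary it elects itself leader (thereby preserving the distinguished agent the safety condition demands), while on an inner boundary it harmlessly withdraws; when more than one candidate is present on the outer boundary none of them withdraws. Chaining these three subphase arguments closes the induction.

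The main obstacle I anticipate is formalizing the atomic semantics of subphase~2: when several consecutive candidates simultaneously flip heads and pass candidacy around the cycle, the same position can be vacated by its current holder and promoted by its predecessor in the same step, so one must check that these interleaved updates still leave a tokened position behind. The token-conservation framing above resolves this cleanly, since any transfer both removes and adds exactly one token; alternatively one can argue positionally that the configuration in which \emph{every} position loses its token would require each $c_j$ to flip heads while its predecessor $c_{j-1}$ flips tails, which is self-contradictory around a cycle. Either formulation combined with the $c^*$ argument for subphase~1 and the observation about subphase~3 on the outer boundary gives the invariant and hence the lemma.
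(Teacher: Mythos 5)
Your proposal is correct and follows essentially the same route as the paper's proof: an induction over phases showing the outer-boundary cycle always retains a candidate, using the maximum-segment-length candidate to survive subphase~1, conservation of candidacy under transferral in subphase~2, and the fact that a sole outer-boundary candidate becomes leader rather than withdrawing in subphase~3. Your added detail on why no candidate can cover the maximal-segment candidate and on token conservation under simultaneous transfers merely fills in steps the paper asserts without elaboration.
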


\begin{proof}
  We will show by induction that on the cycle associated with the outer boundary there will always be at least one candidate.
  Initially, this holds trivially.
  So assume that it holds before a phase.
  Let $c$ be the candidate with the longest segment.
  Then there is no candidate covering $c$ and also $|seg(c)| < |seg(pred(c))|$ cannot be true.
  Hence, $c$ will not withdraw candidacy in subphase 1.
  In subphase 2, the candidacy of $c$ might be transferred but will not vanish.
  Let $c'$ be the agent that received the candidacy if it was transferred and $c' = c$ otherwise.
  In subphase 3, $c'$ will not withdraw candidacy because it lies on the outer boundary.
  Hence, there is still a candidate after the phase. \qed
\end{proof}

\begin{lemma}
  \label{lem:liveness}
  Algorithm~\ref{alg:geometricLE} satisfies the liveness condition.
\end{lemma}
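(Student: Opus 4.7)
The plan is to dispose of liveness by a three-case analysis of the global candidate layout at the start of a phase, noting that the only stochastic component is the coin flipping in Subphase 2. First, if every cycle carries at most one candidate but the system has at least two total, then since there is a unique outer boundary, some inner-boundary cycle hosts exactly one candidate, and that candidate withdraws deterministically in Subphase 3 via the ``only candidate on boundary'' / ``not outside boundary'' branch; here a withdrawal occurs with probability $1$.

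Otherwise I would focus on any cycle $\gamma$ carrying $k \ge 2$ candidates $c_1, \ldots, c_k$ in cyclic order, with segment lengths $s_i := |seg(c_i)|$. If the $s_i$ are not all equal, a cyclic pigeonhole argument yields an index $i$ with $s_{i+1} < s_i$, so $c_{i+1}$ triggers the clause ``$|seg(c)| < |seg(pred(c))|$'' in Subphase 1 and withdraws deterministically. Otherwise all $s_i$ equal a common value $s$, in which case the covering test fails (for every $j \ne i$ one has $d(c_i, c_j) \ge s = |seg(c_j)|$, since $succ(c_i)$ is already at distance $s$ and any other candidate is farther), the segment-comparison test fails by equality, and all $k$ candidates survive Subphase 1 on $\gamma$.

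The key probabilistic step concerns Subphase 2 in this equal-segment regime. Let $H$ be the random set of $\gamma$-candidates that flip heads and $T$ its complement; since $succ$ restricts to a bijection on $\gamma$'s candidate set, the post-Subphase 2 candidate set on $\gamma$ equals $T \cup \{succ(c) : c \in H\}$ and hence has cardinality $k - |\{c \in H : succ(c) \in T\}|$ by a short inclusion-exclusion. Thus some $\gamma$-candidate withdraws unless $H$ is closed under $succ$, which on a cycle forces $H = \emptyset$ or $H = \{c_1, \ldots, c_k\}$; the probability of this bad event is $2^{1-k} \le \tfrac12$ for $k \ge 2$, so a withdrawal on $\gamma$ occurs with probability at least $\tfrac12$.

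Combining the three sub-cases, whenever the system has more than one candidate the phase sees at least one withdrawal with probability at least $\tfrac12$, establishing liveness. The main subtlety to spell out carefully is the Subphase 2 counting: one must verify that promoting an already-candidate successor position is a no-op, so that a heads immediately followed by a tails in cyclic order genuinely produces a net loss of one candidate rather than being silently compensated by the other candidates' simultaneous transfers.
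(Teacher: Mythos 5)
Your proof is correct and follows essentially the same three-case decomposition as the paper: an inner-boundary cycle with a lone candidate withdraws in Subphase~3, unequal segment lengths force a deterministic withdrawal in Subphase~1, and equal segment lengths leave the coin flips of Subphase~2 to reduce the candidate count. The only difference is in the last subcase, where your global counting argument (the candidate set shrinks unless the heads-set is closed under $succ$, an event of probability $2^{1-k}$) yields the sharper constant $\tfrac12$, whereas the paper simply observes that a fixed candidate flips heads while its successor flips tails with probability $\tfrac14$; both suffice for the liveness condition.
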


\begin{proof}
  Assume that there are two or more candidates in the system.
  First we consider the case that there is a cycle with two or more candidates.
  If there are segments of different lengths on that cycle, we have $|seg| < |seg(pred)|$ for at least one candidate which will therefore withdraw its candidacy in subphase 1.
  If all segments are of equal length, we have that in subphase 2 with probability at least $\frac{1}{4}$ there is a candidate $c$ that transfers candidacy while $succ(c)$ does not.
  Hence, the number of candidates is reduced with probability at least $\frac{1}{4}$.
  Now consider the case that all cycles have at most one candidate.
  Then there is a cycle corresponding to an inner boundary that has exactly one candidate.
  That candidate will withdraw candidacy in subphase 3 and thereby reduce the number of candidates in the system. \qed
\end{proof}

The following Theorem is a direct consequence of Lemmas~\ref{lem:safety} and~\ref{lem:liveness}.

\begin{theorem}
  Algorithm~\ref{alg:geometricLE} successfully decides the leader election problem.
\end{theorem}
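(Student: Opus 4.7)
The plan is to derive the theorem as a straightforward consequence of Lemmas~\ref{lem:safety} and~\ref{lem:liveness}, reducing the argument to a standard probabilistic termination statement. First I would fix a computation of the algorithm and let $X_i$ denote the total number of candidates in the system after phase $i$. By Lemma~\ref{lem:safety}, $X_i \geq 1$ for all $i$, since at least one candidate always remains on the outer boundary. By Lemma~\ref{lem:liveness}, whenever $X_i \geq 2$, we have $\Pr[X_{i+1} \leq X_i - 1] \geq p$ for some positive constant $p$. Since the initial number of candidates is finite (at most three times the number of particles), a geometric-tail argument shows that with probability $1$ there exists a phase $t$ after which $X_t = 1$.

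Next, I would identify where this unique candidate must reside. The safety invariant in Lemma~\ref{lem:safety} guarantees that the cycle of the outer boundary always contains at least one candidate. On the other hand, any candidate that is the sole candidate of the cycle of an inner boundary withdraws in subphase 3. Hence, once $X_t = 1$, the surviving candidate necessarily lies on the outer boundary and, by the condition checked in subphase 3 of Algorithm~\ref{alg:geometricLE}, correctly returns "leader".

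Finally, for the decision part of the theorem, I would invoke the broadcast step mentioned at the end of Section~\ref{sec:algorithmLE}: once the unique surviving candidate declares itself leader, it propagates this information through the connected particle structure so that every particle halts. Combined with the fact that no particle halts before this broadcast reaches it, this ensures the system reaches a final state in $F(s)$ and remains there, matching the definition of deciding the problem.

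The main conceptual step, and the one most worth writing carefully, is ruling out a scenario in which a solitary candidate on some inner boundary could somehow persist or mistakenly declare itself leader; this is handled cleanly by the subphase 3 check together with the safety invariant, but it is the one place where the interaction between safety on the outer boundary and the elimination of inner-boundary candidates deserves explicit mention. Everything else is a mechanical combination of the two preceding lemmas with a standard "geometric number of phases" termination argument.
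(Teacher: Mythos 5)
Your proposal is correct and follows essentially the same route as the paper, which simply states the theorem as a direct consequence of Lemmas~\ref{lem:safety} and~\ref{lem:liveness}; you have merely filled in the routine details (the geometric termination argument, the observation that the sole surviving candidate must lie on the outer boundary by the safety invariant, and the terminating broadcast) that the paper leaves implicit. No gap.
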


\subsection{Runtime Analysis}
\label{sec.algoRuntime}
For a cycle of agents let $L$ be the length of the cycle and let $l_i$ be the longest segment length before phase $i$ of the execution of Algorithm~\ref{alg:geometricLE}.
We define $l_i = L$ if there is no candidate on the cycle.
It is easy to see that if $l_i \ge L/2$ then in phase $i + 1$ either the leader is elected (outer boundary) or all candidates on the cycle vanish (inner boundary).
For the case $l_i < L/2$, Lemma~\ref{lem:segmentDoubling} provides the key insight of our analysis.

\begin{lemma}
  \label{lem:segmentDoubling}
  For any phase $i$ such that $l_i < L/2$ it holds $l_{i + 1} \ge l_i$ in any case and $l_{i + 1} \ge 2 l_i$ with probability at least $1/4$.
\end{lemma}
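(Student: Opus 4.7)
My plan is to prove the two statements separately. For the deterministic lower bound $l_{i+1} \ge l_i$, I will fix a candidate $c$ with $|seg(c)| = l_i$ and argue as in the proof of Lemma~\ref{lem:safety} that $c$ survives subphase~1: since $l_i$ is maximal, every other candidate $c'$ satisfies $|seg(c')| \le l_i$ while the forward distance $d(c, c')$ is at least $l_i$ (because $seg(c)$ contains $l_i$ non-candidate agents between $c$ and $succ(c)$), so $c$ is not covered, and $|seg(c)| \ge |seg(pred(c))|$ rules out the second withdrawal condition. Since subphase-2 transfers only land on positions of candidates present at the start of the phase, the $l_i$ non-candidate agents of $seg(c)$ remain non-candidates throughout, and thus lie inside a single post-phase segment of length at least $l_i$. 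If subphase~3 empties the cycle then $l_{i+1} = L > l_i$ by the convention.

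For the doubling statement the core idea is to exhibit a consecutive pair of candidates together with a specific joint coin-flip event of probability $1/4$ whose occurrence merges their segments into one of length at least $2 l_i$. Let $c^+ = succ(c)$. In Case~A, $|seg(c^+)| = l_i$, and both $c$ and $c^+$ survive subphase~1 by the same reasoning as for $c$. The independent-coin event ``$c$ does not transfer and $c^+$ transfers'' then has probability $\frac{1}{2} \cdot \frac{1}{2} = \frac{1}{4}$; in it $c$ stays put while $c^+$'s candidacy moves to the position of $succ(c^+)$, so the post-phase segment starting at $c$ absorbs $c^+$'s position together with $seg(c^+)$, attaining length $l_i + 1 + l_i = 2l_i + 1$.

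Case~B is $|seg(c^+)| < l_i$, under which $c^+$ withdraws in subphase~1 because its segment is shorter than its predecessor's. I will split further on $|seg(pred(c))|$. If $|seg(pred(c))| = l_i$, Case~A applies symmetrically to the pair $(pred(c), c)$ and gives the same probability-$1/4$ doubling event. Otherwise $|seg(pred(c))| < l_i$, and the covering condition $|seg(c)| = l_i > |seg(pred(c))| = d(pred(c), c)$ forces $pred(c)$ to also withdraw in subphase~1, possibly triggering a further backward covering ripple, while a forward chain of ``shorter than pred'' withdrawals past $c^+$ absorbs more successors until the decreasing-segments chain ends. The main obstacle I anticipate is in this isolated-maximum sub-case: after subphase~1 the segment around $c$ has grown on both sides but may not yet reach $2 l_i$, so I need a probability-$1/4$ subphase-2 event (for example, $c$ transferring while its nearest surviving predecessor does not) that merges the two subphase-1-grown zones around $c$ into a single segment of length at least $2 l_i$. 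The crux is the combinatorial bookkeeping showing that one such event of probability $1/4$ always suffices in every parameter regime, leveraging the fact that, by the definition of covering, the backward ripple absorbs at most $l_i$ agents.
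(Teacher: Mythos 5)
Your deterministic half is fine: the observation that subphase-2 transfers only land on positions occupied by candidates at the start of the phase, so the $l_i$ non-candidates of $seg(c)$ stay inside one post-phase segment, is a clean (and arguably tidier than the paper's) way to get $l_{i+1} \ge l_i$.

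The probabilistic half, however, has a genuine gap, and you have named it yourself: the ``isolated maximum'' sub-case, where neither $succ(c)$ nor $pred(c)$ has a segment of length $l_i$, is exactly where the content of the lemma lives, and your proposal only gestures at the bookkeeping needed there. The paper avoids your case split entirely with one observation that you are missing: since $|seg(c)| = l_i$, candidate $c$ \emph{covers} every candidate $c'$ with $d(c',c) < l_i$, so all of them withdraw in Subphase 1, and hence the surviving predecessor of $c$ \emph{after} Subphase 1 sits at distance at least $l_i$ behind $c$, i.e.\ $|seg(pred(c))| \ge l_i$ no matter what the initial segment lengths of $c$'s neighbors were. (Note your remark that ``the backward ripple absorbs at most $l_i$ agents'' points the wrong way; what matters is that it absorbs everything closer than $l_i$, giving a lower bound on the new back segment.) With that in hand, the single event ``$c$ transfers its candidacy while this post-Subphase-1 predecessor does not,'' of probability $1/4$, merges a back segment of length $\ge l_i$, the old position of $c$, and $seg(c)$ of length $\ge l_i$ into one segment of length $\ge 2l_i$, uniformly in all parameter regimes. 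Your Case~A event (front candidate stays, successor transfers) is the mirror image of this and is correct where it applies, but without the covering argument you cannot reduce the general case to it, so as written the proof is incomplete.
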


\begin{proof}
  Consider a candidate $c$ such that $|seg(c)| = l_i$.
  Subphase 1 can only increase segment lengths and $c$ will not withdraw leadership.
  So after Subphase 1 we have $|seg(c)| \ge l_i$.
  Also, we have $|seg(pred(c))| \ge l_i$ because $c$ covers any candidate $c'$ such that $d(c', c) < l_i$.
  Here, we use $pred(c)$ to refer to the cadidate preceeding $c$ \emph{after} the execution of Subphase 1.
  For Subphase 2 we have to distinguish between two cases based on the outcome of the coin flip of $c$.
  If $c$ does not transfer candidacy, we still have $|seg(c)| \ge l_i$.
  For the case that $c$ does transfer candidacy,
  we have to further distinguish two cases based on the outcome of the coin flip of $pred(c)$.
  If $pred(c)$ also transfers its candidacy, $c$ will receive that candidacy
  while $c$ itself transfers its candidacy forward by a distance of $l_i$.
  Therefore, we still have $|seg(c)| \ge l_i$.
  If $pred(c)$ does not transfer candidacy, after Subphase 2 we have $|seg(pred(c))| \ge 2l_i$ because the segment of $pred(c)$ now spans both the segment of $c$ after Subphase $1$ and the segment of $pred(c)$ after Subphase 1,
  which is at least $l_i$.
  The probability that $c$ transfers candidacy while $pred(c)$ does not is $1/4$.
  \qed
\end{proof}

Let $L_\text{max}$ be the length of the longest cycle in the particle system.
Based on Lemma~\ref{lem:segmentDoubling} it is easy to see that under complete synchronization of subphases and with the agents having a global view of the cycle, our algorithm requires on expectation $\mathcal{O}(\log(L_\text{max}))$ phases to elect a leader.
For the following analysis assume that phases still progress in lockstep for all agents; however, the duration of a phase is dependent on the longest segment length, i.e., phase $i$ requires $\mathcal{O}(l_i)$ rounds.
Theorem~\ref{thm:runtime} gives a bound on the number of rounds required by the algorithm based on this assumption.

\begin{theorem}
  \label{thm:runtime}
  Algorithm~\ref{alg:geometricLE} requires $\mathcal{O}(L_\text{max})$ rounds on expectation.
\end{theorem}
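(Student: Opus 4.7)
The plan is to decompose the execution into doubling \emph{epochs} and sum a geometric series. Lemma~\ref{lem:segmentDoubling} is exactly the combinatorial ingredient needed: as long as the longest segment $l_i$ on a cycle is below $L/2$, it is monotonically non-decreasing and at least doubles with probability at least $1/4$ in each phase.

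Concretely, I would define epoch $k$ to be the set of phases $i$ with $l_i \in [2^k, 2^{k+1})$. Because Lemma~\ref{lem:segmentDoubling} guarantees $l_{i+1} \ge l_i$, the phases constituting epoch $k$ are contiguous and, once left, never revisited. Within epoch $k$, each phase independently produces $l_{i+1} \ge 2 l_i \ge 2^{k+1}$ with probability at least $1/4$, ejecting the execution from the epoch. Hence the number of phases spent in epoch $k$ is stochastically dominated by a $\mathrm{Geom}(1/4)$ random variable and has expectation at most $4$. Each such phase costs $\mathcal{O}(2^k)$ rounds by assumption, so linearity of expectation gives expected total cost $\mathcal{O}(2^k)$ for epoch $k$. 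Summing the geometric series
\begin{equation*}
\sum_{k=0}^{\lceil \log_2 L_\text{max} \rceil} \mathcal{O}(2^k) \;=\; \mathcal{O}(L_\text{max}).
\end{equation*}
One final \emph{termination phase}, in which $l_i \ge L/2$ and the cycle resolves (outer boundary elects its leader, inner boundary vanishes), adds at most another $\mathcal{O}(L_\text{max})$ rounds.

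The main obstacle is the interaction between several cycles simultaneously, since phases are globally synchronized but each cycle carries its own coin flips and segment lengths. I would handle this by applying the epoch argument to the \emph{global} longest segment $\ell_i := \max_j l_i^{(j)}$ rather than to any single cycle. If cycle $j^\star$ realizes this maximum in phase $i$, then Lemma~\ref{lem:segmentDoubling} applied to $j^\star$ alone yields $\ell_{i+1} \ge l_{i+1}^{(j^\star)} \ge 2 \ell_i$ with probability at least $1/4$; adopting the convention $l_i^{(j)} := L^{(j)}$ once cycle $j$ terminates preserves monotonicity of $\ell_i$. Since $\ell_i \le L_\text{max}$ throughout, the doubling analysis above applies verbatim to $\ell_i$, and the geometric sum again bounds the expected total runtime by $\mathcal{O}(L_\text{max})$.
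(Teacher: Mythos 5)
Your proof takes essentially the same route as the paper: the paper likewise partitions the phases into doubling epochs (its random variable $X_k$ counts the phases with longest segment length in $[2^{k-1}, 2^k)$), bounds $E(X_k) \le 4$ via Lemma~\ref{lem:segmentDoubling}, and sums the geometric series $\sum_k E(X_k)\cdot \mathcal{O}(2^k) = \mathcal{O}(L_\text{max})$; your write-up merely makes the contiguity of epochs and the geometric domination explicit. One remark on your final paragraph: the paper's proof analyzes a single cycle and does not address the interaction of several cycles at all, so that part is an addition rather than a reconstruction --- but note that tracking the global maximum $\ell_i$ does not quite close the gap, since $\ell_i$ can saturate at $L_\text{max}$ (e.g., once the longest cycle resolves) while a short inner cycle remains unresolved, at which point the doubling argument yields no further progress bound for that cycle.
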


\begin{proof}
  Let the random variable $X_i$ describe the number of rounds during the execution of Algorithm~\ref{alg:geometricLE} such that $l_i \in [2^{i - 1}, 2^i)$.
  Then, under the assumption that phase $i$ requires $\mathcal{O}(l_i)$ rounds, the total runtime of our algorithm is
  \[
    T = \sum_{i = 1}^{\lceil \log(L_\text{max}) \rceil} X_i \cdot O(2^i).
  \]
  Since $E(X_i) \le 4$ due to Lemma~\ref{lem:segmentDoubling}, the expected runtime is
  \[
    E(T) \le \sum_{i = 1}^{\lceil \log(L_\text{max}) \rceil} E(X_i) \cdot O(2^i) = O(L_\text{max}).
  \]
  \qed
\end{proof}
Note that subphase 1 of the algorithm is not important in terms of correctness. However, it is crucial to achieve a linear runtime in expectation. If agents would only execute subphases 2 and 3, the runtime would degrade to $\mathcal{O}(L_\text{max}\log{L_\text{max}})$.

\subsection{Asynchronous Local-Control Protocol}
\label{sec.algoLocalRealization}
In this section we present a realization of Algorithm~\ref{alg:geometricLE} as an asynchronous local-control protocol.
The protocol heavily relies on token passing.
All tokens used by the protocol are messages of constant size and at any time an agent has to hold at most a constant number of tokens.
The tokens of each subphase of Algorithm~\ref{alg:geometricLE} are independent of each other,
so an agent has to handle distinct tokens for each phase at the same time.
If not otherwise specified, we assume that tokens of a single subphase move through the agents in a pipelined fashion
(i.e., a token does not surpass another token in front of it but waits until the agent is free to hold it).

\subsubsection{Candidate Elimination via Segment Comparison}
\label{sec:candidateElimination}
In Subphase 1 of Algorithm \ref{alg:geometricLE}, a candidate $c$ can become demoted if either the length of its front segment is strictly less than that of its back segment, or some other candidate covers $c$ while performing its own comparison. Since $c$ is unable to measure its segments' lengths directly in a local-control setting, it will instead use a token passing scheme to match agents in its front segment with agents in its back segment. Conceptually, this matching proceeds as follows: each agent in the front segment generates a \emph{cover token} which travels into the back segment, matching one by one with the agents therein. If the front segment is longer than the back segment, then some cover token will match with the preceding candidate; a candidate which matches with a cover token is said to be \emph{covered}. If $c$ detects that it was unable to cover any preceding candidates in this process and that the segments lengths were not equal, it concludes that its front segment is too short and revokes candidacy.

In detail, consider $c$ as it begins the segment comparison subphase. Since this subphase requires tokens to be passed in both its front and back segments, we introduce the notion of tokens being either \emph{active} or \emph{passive} to avoid potential collisions with tokens from other candidates. This can be imagined as two different ``lanes'' of tokens which can pass by one another unhindered. Candidate $c$ must first measure the length of its front segment; this is achieved by generating a passive \emph{starting token} and forwarding it along its front segment. Each non-candidate that receives this starting token forwards it and generates a passive cover token which is forwarded back towards $c$. The starting token is ultimately forwarded to the succeeding candidate which consumes it and generates a \emph{final cover token}.

These cover tokens are forwarded back to $c$, which then converts them to active cover tokens when forwarding them to its back segment where they begin to match with the agents therein. Non-candidates in the back segment consume the first cover token that is passed to them and continue to forward the others. If another candidate, say $c'$, receives an active cover token, then $c'$ has been covered by $c$ and revokes candidacy, henceforth behaving as a non-candidate. However, $c'$ may have been executing a number of now irrelevant leader election operations in its front segment; thus, it generates a passive \emph{cleaning token} which is forwarded along its front segment, deleting any tokens it encounters with the exception of those from $c$, which are active. This cleaning token is consumed by $c$. Additionally, it is possible that $c'$ performed work in its back segment, e.g., segment comparison. Therefore, it is possible that there are upcoming non-candidates in its back segment which are now holding incorrect information (e.g. the consumption of a cover token of $c'$). As this incorrect information could interfere with the active cover tokens of $c$, also an active cleaning token is generated to reset all non-candidates in the back segment of $c'$, which is consumed by the first live candidate it encounters. Lastly, this active cleaning token is also responsible for destroying any other cleaning tokens it encounters, as the segment comparison operations of $c$ are the only ones still relevant to this segment.

Eventually, all active cover tokens will be consumed, completing the matching originally described. Candidate $c$ must now gather whether or not it has covered another candidate, which is achieved as follows: instead of matching with the final cover token, the final agent encountered, say $a$, consumes it and generates a \emph{final cleaning token} which travels in the direction of the cycle towards $c$. This final cleaning token records whether or not $a$ was a candidate. At every other agent it visits, it resets the agent's matched state and checks if the agent is a covered candidate. Thus, when $c$ receives this final cleaning token, one of three cases occurs: (1) if the final cleaning token indicates that $c$ covered some other candidate, $c$ remains a candidate; otherwise (2) if the final cleaning token indicates that $a$ was a candidate, $c$ remains a candidate since its segments are of equal length; otherwise (3) $c$'s front segment was too short, and it revokes candidacy.

\subsubsection{Coin Flipping and Candidate Transferral}
\label{sec:canidateTransferral}
In order to realize the second subphase as a local-control protocol, we need a token passing scheme for the candidacy transferral, since a candidate is not able to transfer its candidacy in an instant.
Moreover, since candidates do not have a global view of the system, they cannot know the position of their successor.
The local-control protocol consists of two different token passing schemes and candidates use one of the two schemes dependent on the result of their coin toss.

A candidate $c$ that flips a coin and receives heads sends a \emph{candidacy token} along $seg(c)$.  However, $c$ itself does not give up its candidacy immediately, but (virtually) stays a candidate.
This token is forwarded by non-candidates in $seg(c)$.
A candidate $c'$ that receives a candidacy token sends a \emph{confirmation token} back along the segment of its predecessor. This confirmation token is forwarded back to $c$ such that the virtual copy of $c$ can finally give up its candidacy.
Moreover, there can be three different scenarios for a candidate $c'$ that receives a candidacy token: (i) if $c'$ is not in subphase 2 of the protocol, it continues with its desired behavior, (ii) if it received tails in the coin flip, $c'$ proceeds to the solitude verification (with some caveats that will be explained in the next paragraph), or (iii) if it received heads in the coin flip (i.e. $c'$ is a virtual candidate copy and therefore also aims at transferring leadership), $c'$ will not give up its candidacy (i.e., it will progress to solitude verification once it receives its own confirmation token).
Consequently, it is possible that even though a candidate receives heads, it will not give up candidacy, because its predecessor also received heads.

In addition to the above mentioned token passing scheme for candidates that receive heads, there is also an additional simple scheme for candidates that receive tails. Before progressing to solitude verification a candidate $c$ that receives tails sends a token to $pred(c)$ which is simply sent back by $pred(c)$ and therefore traverses $seg(pred(c))$ twice.

The first token passing scheme makes sure that candidacy tokens are not forwarded infinitely often, but are eventually received by some candidate.
This behavior is due to the fact that candidacy is not revoked immediately after a coin flip, i.e., virtual candidates are able to receive candidacy tokens.

The second scheme guarantees that a candidate $c^*$ with tails synchronizes with its preceding candidate, i.e., $c^*$ waits for the return of its own token and a preceding candidate that flipped heads has the chance to send its own candidacy token to $c^*$ before $c^*$ progresses to solitude verification.
This busy-waiting-like behavior is needed due to the asynchronicity of the amoebot model which could, in a worst-case scenario, prevent progress in the leader election process.
To be more specific consider a scenario with only two remaining candidates $s,t$ with equal segment lengths on a boundary.
Leader election will not make any progress if both candidates continuously get the same result from tossing a coin.
In fact, without the second scheme it is possible to enforce that both candidates always get the same result.
Imagine that candidate $s$ gets heads, while $t$ gets tails in some round $r$.
Since particles act in an asynchronous fashion we can assume that $t$ does not perform any action for any amount of time while $s$ does.
Without the second scheme $s$ can progress to solitude verification (which will fail), then start segment comparison (which will not make any progress) and will flip a coin in again.
We continue this process of going through all the phases until $s$ eventually gets tails.
Note that since $t$ does not perform any action and  due to the definition of an asynchronous round, the round counter does not progress (i.e. the particle system is still in round $r$) and we enforced  that both candidates get the same coin toss result.
With the second scheme, $s$ cannot immediately progress to solitude verification, but waits for its token.
Thereby, $t$ is able to send its candidacy token to $s$ before $s$ can flip a coin for the next time.

\subsubsection{Solitude Verification}
\label{sec:solitudeVerification}
The local-control protocol for solitude verification is based on the following simple observation:
A candidate $c$ is the only candidate left on a cycle if and only if $succ(c) = c$.
To allow the candidates to check this, let each particle assign a unique identifier from $\{1, 2, 3\}$ to each of its agents.
It is easy to see that $succ(c) = c$ if and only if
\begin{enumerate}
  \item $c$ and $succ(c)$ occupy the same node in $\Geqt$, and
  \item $c$ and $succ(c)$ have the same identifier.
\end{enumerate}
Note that since a particle can hold multiple agents, Condition 2 is in fact necessary.
It can be checked using a trivial token passing scheme.
Checking Condition 1 requires a bit more effort:
Intuitively, $c$ enforces its own orientation on all agents in its segment to establish a common coordinate system.
For each agent $a$ in the segment consider the vector pointing from $a$ to $succ(a)$.
The sum of these vectors is $(0, 0)$ if and only if $c$ and $succ(c)$ occupy the same node in $\Geqt$,
see Figure~\ref{fig:solitudesExample}.
\begin{figure}[htb]
  \includegraphics{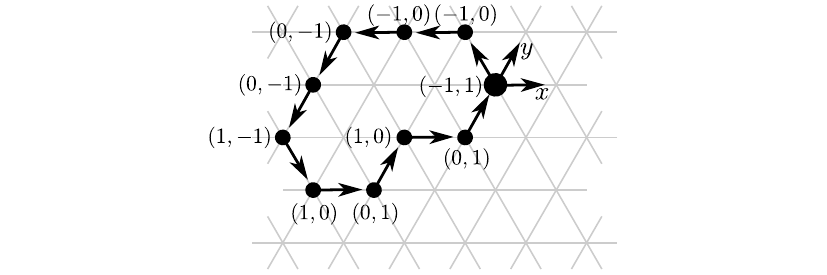}
  \caption{
    An example of solitude verification:
    The candidate (shown slightly bigger) enforces a coordinate system ($x$ and $y$ arrows) on all agents.
    The agents determine the vectors pointing to the succeeding agent in direction of the cycle
    (arrows and tuples at nodes).
    Since the candidate is the only candidate on the cycle, the vectors add up to $(0, 0)$.
  }
  \label{fig:solitudesExample}
\end{figure}

This algorithmic idea can be implemented as a local-control protocol using token passing in the following way.
We use two different types of tokens: \emph{matching tokens} and a unique \emph{activation token}.
The activation token is created by the candidate $c$ and traverses $seg(c)$ four times.
In its first pass, the activation token moves forward (i.e., in direction of the cycle) from $c$ to $succ(c)$
and establishes the common coordinate system among the agents.
Also, whenever the activation token is passed forward in the first pass by an agent $a$,
that agent will create a matching token which stores the vector pointing from $a$ to $succ(a)$ in the common coordinate system.
Once the activation token reaches $succ(c)$,
it initiates its second pass in which it simply moves unhidered backwards (i.e., opposite to the direction of the cycle)
along $seg(c)$ from $succ(c)$ to $c$.
In its third and fourth pass, the activation token again moves forward from $c$ to $succ(c)$ and back.
However, in these two final passes it is not allowed to surpass any matching token.

The matching tokens move from their initial position forwards to $succ(c)$ and then backwards towards $c$.
Every agent is only allowed to store one matching token moving forward and one matching token moving backward
(the direction of movement is stored as part of a token).
Furthermore, a matching token is never allowed to surpass the activation token or another matching token.
Whenever an agent that holds two matching tokens (one for each direction) is activated,
that agent will try to \emph{match} these tokens.
In this matching, the agent compares the vectors stored in the tokens coordinate-wise.
If one vector has a $1$ in a coordinate while the other vector has a $-1$ in that coordinate,
the agent will change both these values to $0$.
Should a token be left with the vector $(0, 0)$ because of this matching process, it is deleted.
Finally, the activated agent passes on any remaining tokens if possible.

It is not hard to see that Condition 1 holds if and only if all matching tokens are deleted in this process.
Furthermore, $c$ can easily distinguish whether all matching tokens have been deleted:
First, note that $c$ will eventually be able to delete or forward the matching token it created itself,
so assume that $c$ does not hold that token anymore.
If some matching token remains, that token will reach $c$ before the activation token returns from its fourth pass
because the activation token cannot surpass the matching token.
On the other hand, if no matching token remains, no such token can reach $c$ before the activation token returns from its fourth pass.
Finally, note that the described process only stores a constant amount of information in a token
and every agent holds at most a constant number of tokens at any time.

\subsubsection{Inner Outer Boundary Test}
The last candidate of a cycle can decide whether its cycle corresponds to an inner or the outer boundary as follows.
A cycle corresponding to an inner boundary has counter-clockwise rotation
while a cycle corresponding to the outer boundary has clockwise rotation, see Figure~\ref{fig:agents}.
The candidate sends a token along the cycle that sums the angles of the turns the cycle takes,
see Figure~\ref{fig:angleSum}.
\begin{figure}[htb]
  \includegraphics{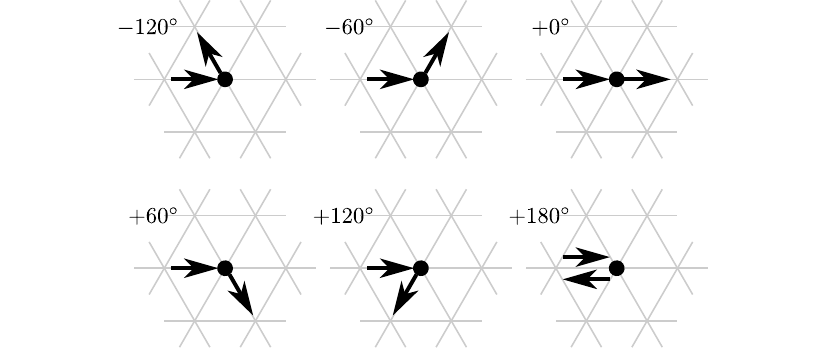}
  \caption{
    The angle between the directions a token enters and exits an agent.
  }
  \label{fig:angleSum}
\end{figure}
When the token returns to the candidate, its value represents the external angle of the polygon
corresponding to the cycle while respecting the rotation of the cycle.
So it is $-360^\circ$ for an inner boundary and $360^\circ$ for the outer boundary.
The token can represent the angle as an integer $k$ such that the angle is $k \cdot 60^\circ$.
Furthermore, to distinguish the two possible final values of $k$ it is sufficient to store the $k$ modulo $5$,
so that the token only requires $3$ bits of memory.

\subsection{Linear Runtime with High Probability}
\label{sec:whp}

The algorithm presented in Section~\ref{sec:algorithmLE} guarantees that a
leader will be elected in an expected linear number of rounds. A small
modification of the algorithm  can lead
to linear runtime {\em with high probability (w.h.p.)}, without
compromising its correctness. For this we only need a slight modification of
Algorithm~\ref{alg:geometricLE}: Subphase 1 needs to be executed twice, and in
Subphase 2 a candidate $c$ generates a sequence of random bits
$b(c)=(b(c)_1,b(c)_2,\ldots)$ that are compared with the random bits of
neighboring candidates instead of just flipping a single coin by themselves.
For two candidates $c$ and $c'$, $b(c)<b(c')$ if there is an $i\ge 0$ so that
$b(c)_j = b(c')_j$ for all $j \le i$ and $b(c)_{i+1} < b(c')_{i+1}$.

\begin{algorithm}[htb]
  \caption{Modified Leader Election}
  \label{alg:geometricLE2}
  \begin{algorithmic}
    \Statex \textbf{Execute Subphase 1 of Algorithm~\ref{alg:geometricLE} twice}
    \\
    \Statex \textbf{Subphase 2:}
    \State generate random bits for $b(c)$ until $b(c)\not=b(pred(c))$
    and $b(c)\not=b(succ(c))$
    \If {$b(c)<b(pred(c))$ or $b(c)<b(succ(c))$}
    \State \Return{not leader}
    \EndIf
    \\
    \Statex \textbf{Execute Subphase 3 of Algorithm~\ref{alg:geometricLE}}
  \end{algorithmic}
\end{algorithm}

In a low-level implementation of Subphase 2, each candidate $c$ continues
to produce random bits for $b(c)$ and sends them in both directions in a
pipelined fashion until it learns that the competition on both sides is over,
i.e., $b(c) \not= b(pred(c))$ and $b(c)\not= b(succ(c))$. The competition is
realized by pairing off the bits from consecutive candidates at the agent
of the segment between them where they meet until a bit pair is different (if both bits are equal, they are deleted at the agent),
which is then reported back to the candidates.

As before, let $l_i$ be the longest segment length before phase $i$, and let
$c$ be a candidate such that $|seg(c)| = l_i$. Subphase 1 can only increase segment
lengths and hence after one execution of Subphase 1, $c$ will not withdraw its leadership candidacy. So after one application of
Subphase 1 we have $|seg(c)| \ge l_i$, and also, $|seg(pred(c))| \ge l_i$.
After another application of subphase 1, various cases can happen:
\begin{itemize}
  \item Case 1: $c$ is not a leader candidate any more. Then the segments of $c$
  and $pred(c)$ now belong to one candidate $c'$, which means that $|seg(c')|
  \ge 2l_i$.

  \item Case 2: $c$ is still a leader candidate, but $pred(c)$ is not a
  candidate any more. Then the segments of $pred(c)$ and $pred(pred(c))$ (by
  which we mean the predecessor of $pred(c)$ after the second application of
  Subphase 1) now belong to $pred(pred(c))$, and since $|seg(pred(pred(c)))| \ge
  |seg(pred(c))|$ before $pred(c)$ gave up its candidacy, after Subphase 1,
  $|seg(pred(pred(c)))| \ge 2l_i$.

  \item Case 3: Both $c$ and $pred(c)$ are still candidates. Note that then, after the second application of Subphase 1,
  also $|seg(pred(pred(c)))| \ge l_i$. Since Subphase 2 ensures that of any two
  consecutive candidates, one of them will give up its candidacy, either $c$ or
  $pred(c)$ will give up its candidacy in Subphase 2, which means that there
  will be a candidate $c'$ with $|seg(c')| \ge 2l_i$.
\end{itemize}
Hence, in each phase the longest segment length is {\em guaranteed} to at
least double, which means that the modified leader election algorithm is
guaranteed to terminate after at most $\log n$ phases. It is easy to show via
Chernoff bounds that in each phase and for each candidate $c$, $\mathcal{O}(\log n)$
bits suffice w.h.p. so that $b(c)\not=b(pred(c))$ and $b(c)\not=b(succ(c))$,
which means that phase $i$ takes at most $\mathcal{O}(l_{i+1} + \log n)$ rounds w.h.p.
Summing up these bounds over all phases results in a runtime of $\mathcal{O}(L_\text{max})$ w.h.p.

Note that while we are confident that the asynchronous local-control algorithm presented in the previous section
performs close to the given simplified analytical bound,
it might require considerable effort to implement the modifications presented in this section as an asynchronous local-control protocol in such a way that the given bound holds.
The main issue is that our analysis requires that the executions of Subphase~1 by Algorithm~\ref{alg:geometricLE2}
are completely synchronized among agents, which appears to be quite tricky to realize.

\section{Line Formation in the Geometric Amoebot Model} \label{sec:lineFormation}
\label{sec:line}
In this section, we consider the line formation problem in the geometric amoebot model.
We assume that initially we have an arbitrary connected structure of contracted particles with a unique leader. The leader is used as the starting point for forming the line of particles and specifies the direction along which this line will grow. As the line grows, every particle touching the line that is already in a valid line position becomes part of the line. Any other particle adjacent to the line becomes the root of a tree of particles.
Every root aims at traveling around the line in a clockwise manner until it joins the line. As a root particle moves, the other particles in its tree follow in a worm-like fashion (i.e., via a series of handover operations)\footnote{For a simulation video of the Line Formation Algorithm please see http://sops.cs.upb.de .}.

Before we give a detailed description of the algorithm, we provide some preliminaries.
We distinguish for the state of a particle between \emph{idle}, \emph{follower}, \emph{root}, and \emph{retired} (or halted).
Initially, all particles are \emph{idle}, except for the leader particle, which is always in a {\em retired} state.
In addition to its state, each particle $p$ may maintain a constant number of {\em flags} in its shared memory.
For an expanded particle, we denote the node the particle last expanded into as the \emph{head} of the particle and call the other occupied node its
\emph{tail}:
In our
algorithm, we assume that every time a particle contracts, it contracts out of its tail.

The spanning forest algorithm (see Algorithm~\ref{alg:spanningForestAlgorithm}) is a basic building block we use for shape formation problems. This algorithm aims at organizing the particles as a spanning forest, where the particles that represent the roots of the trees determine the direction of movement, whom the remaining particles follow. Each particle $p$ continuously runs Algorithm~\ref{alg:spanningForestAlgorithm} until it retires. If particle $p$ is a follower, it stores a flag $p.parent$ in its shared memory corresponding to the edge adjacent to its parent $p'$ in the spanning forest (any particle $q$ can then easily check if $p$ is a child of $q$).
If $p$ is the leader particle, then it sets the flag $p.linedir$ in the shared memories corresponding to two of its edges in opposite directions (i.e., an edge $i$ and the edge $i+3$ (mod 6) in clockwise order), denoting that it would like to extend the line through the directions given by these edges.
\begin{algorithm}[htb]
    {\sc SpanningForest} $(p)$:

    \noindent Particle $p$ acts as follows, depending on its current state:

        \begin{tabularx}{\columnwidth}{lX}
        \textbf{idle}: &
                 If $p$ is connected to a retired particle, then $p$ becomes a {\em root} particle. Otherwise,
             if an adjacent particle $p'$ is a root or a follower,
        $p$ sets the flag $p.parent$ on the shared memory corresponding to the edge to $p'$ and becomes a {\em follower}. If none of the above applies, it remains idle.
        \\
        \textbf{follower}: &
        If $p$ is contracted and connected to a retired particle, then $p$ becomes a {\em root} particle.
         Otherwise, it considers the following three cases: $(i)$ if $p$ is contracted and $p$'s parent $p'$  (given by the flag $p.parent$) is expanded, then $p$ expands in a handover with $p'$, adjusting $p.parent$ to still point to  $p'$ after the handover;
        $(ii)$ if $p$ is expanded and has a contracted child particle $p'$,
        then $p$ executes a handover with $p'$; $(iii)$ if $p$ is expanded,  has no children,
        and $p$ has no idle neighbor, then $p$ contracts.
        \\
        \textbf{root}: &
        Particle $p$ may become {\em retired} following  {\sc CheckRetire} $(p)$. Otherwise, it considers the following three cases: $(i)$ if $p$ is contracted, it tries to expand in the direction given by {\sc LineDir}$(p)$; $(ii)$ If $p$ is expanded and has a child $p'$, then
            $p$ executes a handover contraction with $p'$;
         $(iii)$ if $p$ is expanded and has no children,
        and no idle neighbor,   then $p$ contracts.
        \\
        \textbf{retired}: &
            $p$ performs no further action.
        \\
    \end{tabularx}

		\vspace{.05in}
    {\sc CheckRetire} $(p)$: \\
		\vspace{-.15in}
   \begin{algorithmic}
    \If {$p$ is a contracted root}
    \If {$p$ has an adjacent edge $i$ to $p'$ with a flag $p'.linedir$, where $p'$ is retired}
    \State Let $i'$ be the edge opposite to $i$ in clockwise order
    \State $p$ sets the flag $p.linedir$ in the shared memory of edge $i'$ %edges $i$ and $i'$
    \State $p$ becomes {\em retired}.
    \EndIf
    \EndIf
    \end{algorithmic}

     \vspace{.05in}
    {\sc LineDir($p$)}:\\
		\vspace{-.15in}
    \begin{algorithmic}
        \State Let $i$ be the label of an edge connected to a retired particle.
        \While{edge $i$ points to a retired particle}
        \State $i \; \gets \;$ label of next edge in clockwise direction
        \EndWhile
        \State \textbf{return} $i$
    \end{algorithmic}

    \caption{Line Formation Algorithm}
    \label{alg:spanningForestAlgorithm}
\end{algorithm}

We have the following theorem, where \emph{work} is defined as the number of expansions and contractions executed by all particles in the system:
\begin{theorem}
    Algorithm~\ref{alg:spanningForestAlgorithm} correctly decides the line formation problem in worst-case optimal $\mathcal{O}(n^2)$ work.
    \label{thm:line}
\end{theorem}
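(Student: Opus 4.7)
\emph{Correctness.} My plan is to establish three invariants by induction on the sequence of actions: (i) the occupied nodes stay connected, (ii) the non-retired, non-idle particles form a spanning forest whose roots are precisely the \emph{root}-state particles sitting on the boundary of the retired region, and (iii) the retired particles always form a contiguous straight line segment extending from the leader in the two directions fixed by its initial \emph{linedir} flags. Invariant (i) holds because every shape change is either a leaf expansion/contraction in the spanning forest or a parent/child handover, neither of which can disconnect the structure. Invariant (ii) is preserved by the idle$\to$follower/root and follower$\to$root transitions, which fire only when the required parent or retired neighbor is present. Invariant (iii) is the geometric core: whenever a contracted root retires, \textsc{CheckRetire} guarantees it sits adjacent to a retired particle $p'$ whose \emph{linedir} edge points exactly at it, and it installs its own \emph{linedir} flag on the opposite edge, extending the line by exactly one cell along the leader's axis.

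\emph{Progress and termination.} I would next show that as long as some particle is non-retired, some movement is enabled. For any root $p$, \textsc{LineDir} rotates clockwise and returns an edge to a non-retired neighbor; if none exists, \textsc{CheckRetire} succeeds because some neighbor must carry a \emph{linedir} flag aimed at $p$, so $p$ retires. Combined with the handover mechanism, each root traverses the boundary of the retired region clockwise, dragging its tree along in worm-like fashion, until it reaches a valid line position and retires; the next particle in its tree then becomes a root, and so on. Since the boundary length is always $\mathcal{O}(n)$ and each completed traversal strictly reduces the number of non-retired particles, the system terminates in a configuration where every particle lies on the line.

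\emph{Work bound and optimality.} For the $\mathcal{O}(n^2)$ upper bound I would show that each particle performs $\mathcal{O}(n)$ expansion/contraction operations over its lifetime. A root's path before retiring is contained in the boundary of the current retired region, of length $\mathcal{O}(n)$, so it performs at most $\mathcal{O}(n)$ shape changes. A follower moves only via handovers that shift it one hop along the tree towards its root, and the concatenated path it will travel, first as follower and then as root, from its initial cell to its final line cell has length $\mathcal{O}(n)$; hence each of the $n$ particles performs $\mathcal{O}(n)$ shape changes, for a total of $\mathcal{O}(n^2)$. For the matching lower bound I would take an initial configuration consisting of a hexagonal blob of $n$ particles centered at the leader: the blob has radius $\Theta(\sqrt{n})$, whereas the final line extends to hop-distance $n-1$, so $\Omega(n)$ particles end at hop-distance $\Omega(n)$ from their initial cells and must each incur $\Omega(n)$ expansion operations, giving $\Omega(n^2)$ total work.

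\emph{Main obstacle.} I expect the subtlest point to be turning the informal follower-charging argument into a rigorous amortized analysis under asynchrony: handovers change two particles at once, and a follower may later become a root of a different subtree, which requires care to avoid double counting the movements of a single particle across its idle, follower, and root phases.
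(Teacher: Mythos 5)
Your overall decomposition matches the paper's: spanning-forest invariants for the followers, a termination argument for the roots circling the retired line, a per-particle $\mathcal{O}(n)$ movement bound summed over $n$ particles, and an $\Omega(n^2)$ lower bound from a diameter-$\Theta(\sqrt{n})$ initial configuration. Your work accounting (a root's path bounded by the size of the retired structure, a follower's movements bounded by the length of its path to its root in the forest) and your lower bound are essentially the paper's Lemma~\ref{lem:worstCaseLine} and Theorem~\ref{thm:work}.

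The genuine gap is in your progress/termination step. You assert that each root ``traverses the boundary of the retired region clockwise \ldots until it reaches a valid line position and retires,'' and that ``each completed traversal strictly reduces the number of non-retired particles.'' But a root $p$ can be blocked when the node returned by \textsc{LineDir} is occupied by another particle, and the decrease in non-retired particles occurs only if some traversal actually completes --- so your argument presupposes exactly what it must prove, namely that no root is blocked forever. (Your fallback, that if no non-retired neighbor exists then \textsc{CheckRetire} fires, also mischaracterizes the algorithm: a root retires only when it occupies a node flagged by a retired particle's \emph{linedir}, not when its neighborhood is fully retired.) The paper closes this hole with a contradiction argument: if $p$ never retires, consider the particle $p'$ that is $p$'s last clockwise neighbor over the line structure; since $p'$ touches a retired particle it becomes a root within two rounds and itself attempts to move clockwise, so the same argument applies to $p'$, and inductively one obtains an infinite sequence of roots adjacent to the finite line structure that never reach a valid extension spot --- a contradiction. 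You need this (or an equivalent well-founded ordering on blocked roots), together with the paper's separate lemma that every expanded particle eventually contracts (Lemma~\ref{lem:contract}), to make the ``dragging its tree along'' picture rigorous.
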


In order to prove Theorem~\ref{thm:line}, we first prove that the algorithm will eventually correctly converge to a line in $\Geqt$ in Sections~\ref{sec:spforest} and \ref{sec:lineconvergence}, and then show that the algorithm terminates within worst-case optimal $\mathcal{O}(n^2)$ work in Section~\ref{sec:linework}.
%\footnote{In the conference version of this paper~\cite{DNA}, we also state that the algorithm terminates in a linear number of rounds: Given space considerations, those results will be part of a follow-up paper where we address the round complexity of a wider range of problems.}

\subsection{Spanning Forest Formation}
\label{sec:spforest}
The first three lemmas demonstrate some properties that hold during the execution of the spanning forest algorithm  and will be used in Section~\ref{sec:lineconvergence} to analyze the complete algorithm, when we incorporate the check for retirement of particles according to the line formation problem,  and the propagation of the line direction.

The configuration of the system of particles at time $t$ consists, for every particle $p$, of the current state of $p$, including whether the particle is expanded or contracted, any flags in $p$'s shared memory, the
node(s) $p$ occupies in $G$ (given by the relative position of $p$ according to the other particles) at time $t$, as well as the labeling of the bonds of $p$.
Following the standard asynchronous model, the system of particles progresses by performing atomic actions, each of which affects the configuration
 of one or two particles. We say that 	 followers and roots
%and retired
particles are \emph{active}.
As specified in Algorithm~\ref{alg:spanningForestAlgorithm}, only followers can set the flag $p.parent$.
\begin{lemma}
    \label{lem:successor}
    For every follower $p$, the node indicated by the flag $p.parent$ is occupied by a non-idle particle.
\end{lemma}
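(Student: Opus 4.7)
The plan is to prove this as an invariant of the algorithm by induction on the sequence of atomic actions executed by the particle system. For the base case, before any action is taken, no particle is a follower, so the claim holds vacuously.

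For the inductive step, the flag $p.parent$ can be set or modified only in two places. First, an idle particle $p$ transitions to follower by choosing an adjacent root or follower $p'$ and placing $p.parent$ on the edge to $p'$; since both roots and followers are non-idle, the invariant holds. Second, when $p$ executes a handover with its parent $p'$ (case (i) of the follower rules), the algorithm explicitly re-adjusts $p.parent$ to still point to $p'$ after the handover, and $p'$ remains non-idle because state transitions never regress back to idle. Thus actions that directly touch $p.parent$ preserve the invariant.

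The subtle case---and the main obstacle---is when some other action changes the occupant of the node $v$ referenced by an existing $p.parent$ flag. Since no state transition returns a particle to idle, the only concern is whether the current non-idle occupant $p'$ of $v$ could vacate $v$. By the algorithm, $p'$ can contract in only two ways: via case (iii) of its state, which requires $p'$ to have no children, or via a handover with one of its contracted children. The first option is excluded because $p$, having $p.parent$ pointing to $p'$, is by definition a child of $p'$, so $p'$ has at least one child. For the second option, recall the convention that expanded particles always contract out of their tails: during a handover between $p'$ and a contracted child $p''$, the child $p''$ expands into $p'$'s tail while $p'$ contracts into its head. If $v$ is $p'$'s head, then $p'$ still occupies $v$ afterwards; if $v$ is $p'$'s tail, then $v$ is now occupied by $p''$, which is itself a non-idle follower. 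Either way the invariant is preserved, which closes the induction.
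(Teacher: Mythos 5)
Your proposal is correct and follows essentially the same route as the paper: an induction over atomic actions whose only nontrivial case is a handover in which the parent $p'$ contracts out of its tail and a third particle takes over the flagged node, so the flag again points to a non-idle particle. The only cosmetic difference is that you call that third particle a follower, whereas it need only be active (follower or root), which is all the invariant requires.
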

\begin{proof}
    Consider a follower $p$ in any configuration during the execution of Algorithm~\ref{alg:spanningForestAlgorithm}.
    Note that $p$ can only become a  follower from an idle state,
    and once it leaves the follower state it will not switch back to that state again.
    Consider the first configuration $c_1$ in which $p$ is a follower.
    In the configuration $c_0$ immediately before $c_1$, $p$ must be idle
    and it becomes a follower because of an active particle $p'$ occupying the position indicated by $p.parent$ in $c_0$.
    The particle $p'$ is still adjacent to the edge flagged by $p.parent$ in $c_1$.
    Now assume that $p.parent$ points to an active particle $p'$ in a configuration $c_i$,
    and that $p$ is still a follower in the next configuration $c_{i+1}$ that results from executing an action $a$.
    If $a$ affects $p$ and $p'$, the action must be a handover in which $p$ updates its flag $p.parent$
    such that $p.parent$ may be moved to the edge that now connects $p$ to $p'$ in $c_{i+1}$.
    If $a$ affects $p$ but not $p'$,
    it must be a contraction in which $p.parent$ does not change and still points to $p'$.
    If $a$ affects $p'$ but not $p$, there are multiple possibilities.
    The particle $p'$ might switch from follower to root state,
     or from root to retired state,
        or it might expand,
    none of which violates the lemma.
    Furthermore, $p'$ might contract.
    If $p.parent$ points to the head of $p'$, $p'$ is still adjacent to the edge flagged by $p.parent$ in $c_{i+1}$.
    Otherwise, $p$ is a child adjacent to the tail of $p'$ in $c_i$ and therefore the contraction must be part of a handover.
    As $p$ is not involved in the action, the handover must be between $p'$ and a third active particle $p''$.
    After the handover, $p''$ will occupy the position originally occupied by the tail of $p'$ and hence $p.parent$ points to $p''$. %either $p'$ or $p''$.
    Finally, if $a$ affects neither $p$ nor $p'$, $p.parent$ will still point to $p'$ in $c_{i+1}$.
    \qed
\end{proof}

Based on Lemma~\ref{lem:successor},
$A(c)$ contains the same nodes as the nodes occupied in $\Geqt$ by the set of active particles in $c$.
% induced by the occupied nodes in $c$.
For every expanded particle $p$ in $c$, $A(c)$ contains a directed edge from the tail to the head of $p$, and for every follower $p'$ in $c$, $A(c)$ contains a directed edge from the head of $p'$ to $p'.parent$, if $p'.parent$ is occupied by an active particle.

\begin{lemma}
    \label{lem:forest}
    The graph $A(c)$ is a forest, %and if there is at least one active particle,
    and every connected component of idle particles %contains a particle that
		 is connected to a non-idle  particle.
\end{lemma}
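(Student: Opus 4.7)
The plan is to prove both halves of the lemma by induction on the sequence of atomic actions producing the configuration $c$ from the initial configuration. The base case is immediate: initially the only non-idle particle is the retired leader, there are no followers or roots, so $A(c)$ is empty and vacuously a forest; since the initial particle structure is connected, every idle particle lies in a single component which is adjacent to the leader.

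For the forest property, rather than tracking $A(c)$ directly I would introduce an auxiliary particle-level graph $A'(c)$ whose vertices are the active particles and whose edges are $p \to p'$ for every follower $p$ such that $p.parent$ is occupied by an active particle $p'$. I would first show that $A'(c)$ is a forest throughout the execution, then deduce the property for $A(c)$. Walking through each atomic action for $A'(c)$: an idle particle becoming a root adds an isolated vertex; an idle particle becoming a follower adds a new vertex with a single outgoing edge to an existing vertex (and in-degree $0$, so no cycle can pass through it); a follower becoming a root removes its outgoing edge; a root retiring deletes a vertex together with any incident in-edges; and handovers as well as stand-alone expansions and contractions reshuffle physical positions without altering which particle is the parent of which, so $A'(c)$ is unchanged. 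From "$A'(c)$ is a forest" I would pass to "$A(c)$ is a forest" by noting that every vertex of $A(c)$ has out-degree at most one, so any cycle in $A(c)$, when contracted to the particle level, would either give a cycle in $A'(c)$ or lie entirely within the at most two positions of one particle; the former is ruled out by the inductive argument, and the latter is impossible because a particle's only intra-particle edge is the tail-to-head edge of an expanded particle, whereas the follower parent edge always exits the particle to a node of a different particle, which is guaranteed to be non-idle (and hence a vertex of $A(c)$ or out of it) by Lemma~\ref{lem:successor}.

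For the second half, I would rely on two standing invariants preserved by Algorithm~\ref{alg:spanningForestAlgorithm}: the particle structure remains connected at every moment (maintained by handovers and by the condition that a particle contracts on its own only when it has no idle neighbor and no follower child), and there is always at least one non-idle particle (the leader starts retired and remains so). Given these, any connected component $C$ of the subgraph induced by the idle particles must contain at least one particle adjacent to a non-idle particle; otherwise $C$ would have no edges to the rest of the structure and would itself be a connected component of the full structure, contradicting full connectivity in the presence of a non-idle leader.

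The main obstacle will be the handover case in the analysis of $A'(c)$: even though the physical positions of the two involved particles change and the edge carrying the $p.parent$ flag is relocated to a different adjacent pair of nodes, one must argue that the particle-level parent still resolves to the same particle $p'$. This is the content of the handover clause "adjusting $p.parent$ to still point to $p'$ after the handover" in Algorithm~\ref{alg:spanningForestAlgorithm}, but to conclude it one invokes Lemma~\ref{lem:successor} to know that $p'$ is still active (or at least non-idle) after the move, so that the edge in $A'(c)$ really does remain $p \to p'$.
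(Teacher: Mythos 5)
Your overall plan---induction over the sequence of atomic actions, with the base case given by the all-idle configuration plus the retired leader---matches the paper's, and lifting the forest property from a particle-level graph $A'(c)$ to the node-level graph $A(c)$ via the out-degree-at-most-one observation is a legitimate variant. However, there is a genuine error in the key step: the claim that handovers ``reshuffle physical positions without altering which particle is the parent of which, so $A'(c)$ is unchanged'' is false, and your proposed repair addresses the wrong case. The problematic situation is a handover between a parent $p'$ and one of its children $p''$ in which a \emph{third} follower $p$, not involved in the action, has its flag $p.parent$ on the edge to the tail of $p'$: after the handover that node is occupied by $p''$, so the particle-level edge changes from $p \to p'$ to $p \to p''$. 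This is precisely the three-particle case worked out in the proof of Lemma~\ref{lem:successor} (``After the handover, $p''$ will occupy the position originally occupied by the tail of $p'$ and hence $p.parent$ points to $p''$''), so invoking that lemma yields the opposite of what you assert. The forest property does survive the re-parenting, but this needs an argument you have not supplied: the new edge $p \to p''$ together with the preserved edge $p'' \to p'$ only lengthens the parent chain of $p$ by one, and since out-degrees are at most one, any directed cycle in the new graph through $p \to p'' \to p'$ would, after substituting the old edge $p \to p'$, produce a cycle that already existed before the action.

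For the second half of the lemma you replace the paper's local argument by an appeal to global connectivity of the particle structure together with the existence of the non-idle leader. The deduction from those two facts is sound, but global connectivity preservation is itself a nontrivial invariant that you assert rather than prove: solo contractions are forbidden only when the particle has an idle neighbor or a child, so you would still have to rule out disconnecting a non-idle, non-child neighbor of the vacated tail node. The paper sidesteps this by arguing locally about the idle component itself: a non-idle particle adjacent to an idle component can never contract in isolation (case $(iii)$ of both the follower and root rules requires no idle neighbor), and if it contracts as part of a handover its vacated node remains occupied by the active handover partner, so the idle component stays attached to a non-idle particle. Either you should adopt that local argument, or you need to add a separate proof that the occupied set never disconnects.
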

\begin{proof}
		 Since in an initial configuration $c_0$ all particles are idle, except for the leader particle which is retired,  and the particle system is connected, the lemma holds trivially for $c_0$.
    Now assume that the lemma holds up to a cerain configuration $c_i$ and consider a connected component of idle particles.
    %We will show that it also holds for the next configuration $c_{i+1}$ that results from executing an action $a$.
    If an idle particle $p$ in the component is activated, it may stay idle or become an active particle.
    The former case does not affect the configuration. So consider  the latter case. If $p$ has changed into a follower state, it joins an existing tree, and in case $p$ has become a root, it forms a new tree in $A(c_{i+1})$.
    In either case, $A(c_{i+1})$ is a forest and the connected component of idle particles
    that $p$ belongs to in $c_i$ is either non-existent or connected to $p$ in $c_{i+1}$.
    If a follower or a root particle $p$ that is connected to the idle component is activated,  it cannot contract
		 unless by a handover with another active particle, which implies the nodes occupied by $p$ will remain occupied by the active
		particles.  While such a handover can change the parent relation among the nodes, it cannot violate the lemma.
		 If an active particle $p$ is activated and has no child $p'$ such that $p'.parent$ is the tail of $p$ and $p$ is not connected to the idle component, it contracts and its contraction does not disconnect
		any particle in its tree in $A(c_{i+1})$.
		Moreover, an expansion of an active particle or changing its state to a root, if it is a follower, or to a retired particle, if it is a root particle, does not violate the lemma too.
		 Finally, if  a retired particle %that is connected to the idle component
			 is activated, it does not move.
		 Therefore, the lemma holds at configuration $c_{i+1}$.

    \qed
\end{proof}

The following lemma shows that the spanning forest always makes progress, by showing that as long as the roots keep moving, the remaining particles will eventually follow.

\begin{lemma}
    \label{lem:contract}
    An expanded particle eventually contracts.
\end{lemma}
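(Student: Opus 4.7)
The plan is to argue by contradiction: assume that $p$ is expanded at some time $t_0$ and never becomes contracted thereafter, and derive a contradiction from the finiteness of the spanning tree.

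First, I would show that the idle-neighbor obstacle vanishes quickly. Idle particles never move (their only action upon activation is a state transition to follower or root) and no particle ever reverts to the idle state, so the set of idle neighbors of $p$ is monotonically non-increasing. Every idle neighbor of $p$, once activated, finds a non-idle neighbor (namely $p$) and exits the idle state. Hence there is a time $t_1 \ge t_0$ after which $p$ has no idle neighbors, and this property persists forever.

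Next, I combine this with the non-contraction assumption to pin down $p$'s local configuration. Inspecting Algorithm~\ref{alg:spanningForestAlgorithm}, at each activation of $p$ after $t_1$ neither Case (ii) (handover with a contracted child) nor Case (iii) (contract when childless and without idle neighbors) may fire; therefore $p$ must have at least one child and all its children must be expanded. Moreover, this cannot fail between activations of $p$ either: if some child $c$ of $p$ ever became contracted, then the very next activation of either $p$ (invoking Case (ii) on $c$) or $c$ (invoking Case (i) with its expanded parent $p$) would force $p$ to contract, contradicting the assumption. Applying the same reasoning recursively, every expanded descendant $d$ of $p$ also has the property that if it ever contracted, a chain of Case (ii)/Case (i) handovers would propagate the contraction upward through $d$'s ancestors and ultimately contract $p$. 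Hence no expanded descendant of $p$ ever contracts, and by the same argument each must always have at least one expanded child (or it would itself contract by Case (ii) or Case (iii)).

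The contradiction now comes from finiteness of $T_p$. At any time after $t_1$, pick a \emph{deepest} expanded particle $L$ in the subtree $T_p$. By choice of $L$ no child of $L$ is expanded, so $L$ either has no children at all, or has only contracted children. In the former case, once $L$'s own idle neighbors clear (at most one more round, by the same monotonicity argument), Case (iii) fires at $L$ and $L$ contracts. In the latter case, Case (ii) fires at $L$'s next activation and $L$ contracts (handing its expansion over to a contracted child). Either way $L$ contracts, which by the cascade argument of the previous paragraph forces $p$ to contract — contradicting the assumption.

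The main delicacy is making the upward cascade rigorous: I must justify that once a descendant $q$ of $p$ becomes contracted at a given instant, $q$'s expanded parent is guaranteed to perform Case (ii) (or $q$ itself performs Case (i) with its expanded parent) before anything destabilizes the situation. This uses fair asynchronous scheduling together with the observation that the only way a contracted follower in $T_p$ can re-expand is via Case (i) with its expanded parent, and that such a handover itself propagates the contraction one step upward. Once this is established, the finite depth of $T_p$ ensures that the cascade reaches $p$ in finite time, yielding the required contradiction.
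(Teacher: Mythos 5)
Your proof is correct and follows essentially the same route as the paper's: argue that idle neighbors eventually vanish, deduce that a non-contracting expanded particle must have only expanded children, descend the (finite) tree to an expanded particle with no expanded children which must therefore contract, and cascade the contraction back up via handovers. Your extremal phrasing (``deepest expanded particle'') is a slightly cleaner packaging of the paper's ``follow a branch of expanded particles to a leaf,'' but the underlying argument is the same.
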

\begin{proof}
    Consider an expanded particle $p$ in a configuration $c$.
    Note that $p$ must be active.
    If there is an enabled action that includes the contraction of $p$,
    that action will remain enabled until $p$ eventually contracts  when it is selected within the current or next round.
      So assume that there is no enabled action that includes the contraction of $p$.
    According to Lemma~\ref{lem:forest} and the transition rule from idle to some active state,
    at some point in time there will be no idle particles in the system.
    If the contraction of $p$ becomes part of an enabled action before this happens, $p$ will eventually contract.
    So assume that all particles are non-idle but still $p$ cannot contract.
    If $p$ has no children, the isolated contraction of $p$ is an enabled action which contradicts our assumption.
    Therefore, $p$ must have children. %predecessors
    Furthermore, at least one child $p'$ of $p$ must have a $p'.parent$ flag on the edge connecting to the tail of $p$ and all children of $p$ must be expanded,
    as otherwise $p$ could again contract as part of a handover.
    If $p'$ would contract, a handover between $p'$ and $p$ would become an enabled action and $p$ would eventually contract. Hence $p'$ also cannot contract. Applying this argument recursively, we identify
to $p'$
    a set of expanded particles forming a branch of a tree in $A(c)$, until we reach a leaf $q$ of $A(c)$ (by Lemma~\ref{lem:forest}). Obviously $q$ will contract next time it is activated.
    Therefore, we found a sequence of expanded particles that starts with $p'$
    and ends with a particle that eventually contracts, implying that
it in the sequence to contract and so on.
    the contraction of $p$ will become part of an enabled action and therefore $p$ will eventually contract.
    \qed
\end{proof}

\subsection{Line Formation}
\label{sec:lineconvergence}
Now, we can show that the algorithmic primitives as developed in the Section~\ref{sec:lineFormation} solve the line formation problem.

\begin{theorem}
    \label{thm:solve}
    Algorithm~\ref{alg:spanningForestAlgorithm} correctly decides the line formation problem.
\end{theorem}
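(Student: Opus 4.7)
My plan is to decompose the argument into a safety invariant about the shape formed by the retired particles and a liveness argument showing that every active particle eventually retires. Because retired particles never act again, once every particle has retired the system is in a final state and has halted; therefore it suffices to establish (i) the set of retired particles always forms a straight contiguous line segment in $\Geqt$ whose two extreme particles have a $linedir$ flag pointing outward along that line, and (ii) the number of non-retired (i.e.\ idle/follower/root) particles strictly decreases over time.

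First I would prove the safety invariant by induction on the sequence of configurations, with the only nontrivial transitions being invocations of \textsc{CheckRetire}. Initially only the leader is retired, and by definition it places $linedir$ on two opposite edges, so the invariant holds as a degenerate one-particle line. For the inductive step, whenever a contracted root $p$ retires via \textsc{CheckRetire}, it does so because it is adjacent to a retired particle $p'$ across an edge $i$ carrying $p'.linedir$; by the induction hypothesis $p'$ must be an extreme particle of the current line and $i$ points outward along the line direction. The code sets $p.linedir$ on the edge $i'$ opposite $i$ in clockwise order around $p$, which in the triangular grid is precisely the edge collinear with $i$ pointing further outward. Hence the retired set remains a contiguous straight segment, and the new extreme particle $p$ carries the outward-pointing $linedir$ flag, re-establishing the invariant. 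Observe also that \textsc{LineDir}$(p)$ is well-defined for any root adjacent to a retired particle, since it starts at a retired neighbor and rotates clockwise until finding a non-retired neighbor; by connectivity and the fact that the retired structure is a finite line, such a neighbor must exist.

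For liveness, I would show that every active particle eventually retires. The argument proceeds in three stages. First, Lemma~\ref{lem:contract} guarantees that no particle stays expanded forever, so the bottleneck can only be whether active particles reach the retired line. Second, I claim that every follower eventually becomes a root: by Lemma~\ref{lem:forest}, each follower lies in a tree whose root $r$ is either a root particle or an idle component connected to the retired set (which will spawn a new root). Because $r$ continually advances via expansions and handovers (each of which is eventually executed, by Lemma~\ref{lem:contract} and the scheduler's fairness), any child chasing its parent through repeated handovers eventually reaches a position adjacent to a retired particle and itself becomes a root. Third, I claim that every root eventually retires: by the choice of \textsc{LineDir}, a root's expansions carry it clockwise along the boundary of the current retired line; since the retired set is a finite straight segment, a clockwise tour around it is finite, so in finitely many steps the root either becomes adjacent to the flagged extreme edge of one endpoint and retires, or is blocked by other roots/followers which themselves must act first by Lemma~\ref{lem:contract}. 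A standard well-founded-ordering argument (e.g., lexicographic on the number of retired particles and the total clockwise distance from each root to the nearest flagged endpoint) rules out a deadlock among several roots. Combining these, the number of non-retired particles decreases to zero, so eventually every particle is retired, the configuration is a straight contracted line, and all particles have halted.

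The main obstacle, in my view, is the third stage: formally ruling out pathological interleavings in which multiple roots and their follower trees perpetually shuffle around the line without any one of them actually reaching a flagged endpoint. The key technical lemma would be that along any clockwise walk of a root, the set of positions it visits monotonically progresses toward the flagged endpoint modulo handover blocking, and that blocking is always finite because of Lemma~\ref{lem:contract} together with the tree structure of Lemma~\ref{lem:forest}; pinning this down precisely, while avoiding case analyses over all possible local collisions between adjacent root trees, is where most of the work of the proof will actually lie.
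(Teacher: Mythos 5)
Your decomposition matches the paper's: establish that the retired set is always a straight line (safety of the shape) and that every particle eventually retires (liveness), using Lemmas~\ref{lem:forest} and~\ref{lem:contract} as the workhorses. Your safety invariant is actually more explicit than the paper's treatment --- the paper only observes at the very end that the retired structure must be a line because a root retires only when contracted and occupying a valid extension position --- and making the invariant inductive as you do is a perfectly good (arguably cleaner) way to phrase it.

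The divergence, and the one real concern, is in the step you yourself flag as the main obstacle: showing that every root eventually retires despite being blocked by other roots and their follower trees. The paper resolves this by contradiction rather than by a ranking function: if a root $p$ is blocked forever, it considers the particle $p'$ that $p$ last sees as its clockwise neighbor over the line structure, argues that $p'$ is (or within two rounds becomes) a root adjacent to the line that likewise never reaches a flagged position, and iterates this to produce an infinite sequence of distinct roots adjacent to the bounded line structure --- a contradiction. Your proposed lexicographic measure (number of retired particles, total clockwise distance of all roots to the nearest flagged endpoint) does not work as stated: whenever a follower is promoted to a root, a new summand is added to the second component, so the measure can increase while the first component stays fixed, and the ordering is not monotone along executions. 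You would either need to restrict the sum to a fixed root and argue per-root progress (which reintroduces exactly the blocking problem), or switch to something like the paper's infinite-descent/pigeonhole argument on the finitely many positions adjacent to the line. The rest of your plan (followers chase their tree roots via handovers and become roots upon touching the retired structure; Lemma~\ref{lem:contract} prevents permanent blocking by expanded particles) agrees with the paper's proof in substance, so closing this one step is all that separates your proposal from a complete argument.
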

\begin{proof}
    We need to show that the algorithm terminates and that when it does, the formed shape is a straight line in $\Geqt$.
        As a result of Lemma~\ref{lem:forest} and due to  the transition rules from idle to active states, every idle particle $p$ eventually switches out of idle state.
    According to the algorithm proposed for the line formation problem, if $p$ is adjacent to a retired particle, it becomes a root and moves in clockwise order  around the current line structure,
		 until it eventually reaches one of the valid positions that can extend the line and becomes retired (halted). By contradiction, assume $p$ never becomes retired. Since the number of particles is bounded (and therefore the size of the current line structure  is bounded), $p$ cannot move around the line structure indefinitely (since in this case $p$ must occupy a valid extension position and would have become retired). Thus there must be an infinite number of configurations $c_i$ where $p$ has a particle blocking its desired path around the current line structure.
		Let $p'$ be the last particle $p$ sees as its clockwise neighbor over the line structure. Since $p'$ is touching a retired particle, $p'$ will become a root particle within at most two rounds, and will stay connected to the line structure and always attempt to move in a clockwise manner, $p'$ is well-defined.
     Applying the same argument we had for $p$ inductively to $p'$ results an infinite sequence of roots adjacent to the line structure that never touch a valid spot pointed by one of  $p.linedir$ flags of an already retired particle belong to the line. This is a contradiction, since the current line structure (the current number of retired particles) is bounded.
     Therefore, every root eventually changes into a retired state.

According to spanning forest construction, as long as the roots keep moving the followers eventually follow.
 From Algorithm~\ref{alg:spanningForestAlgorithm}, every follower in the
 neighborhood of a retired particle becomes a root. For every root $q$ with at least one follower child, let $c$ be the first
configuration when $q$ becomes retired.
If $q$ still has any child
 in $c$ then all of its children $p$
become roots. Applying this argument recursively together with the already proven fact that every root eventually becomes retired we will reach to a configuration such that there exists no root with a follower child
 which proves that eventually every follower becomes a root.

 Putting it all together, eventually all particles become retired and the algorithm terminates.
 Note that it also follows from the argument above that the set of retired particles at the end of the algorithm forms a connected structure (since the particles start from a connected configuration and never get disconnected through the process). The connected structure must form a line, since a root particle may only become retired once it is contracted and occupies a valid spot extending the current line.
\qed
\end{proof}

\subsection{Performed Work}
\label{sec:linework}
Finally, we evaluate the performance of our algorithm in terms of the number of movements (expansions and contractions) of the particles, i.e., the total {\em work} performed by the algorithm.

\begin{lemma}
    \label{lem:worstCaseLine}
    The worst-case work required by any algorithm to solve the line formation problem is $\Omega(n^2)$.
\end{lemma}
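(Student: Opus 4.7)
My plan is to exhibit an initial configuration and lower-bound the work any algorithm must perform on it. Take the initial state to be a tightly packed ``hexagonal ball'' of $n$ contracted particles in $\Geqt$, with one of them designated as leader; all $n$ occupied cells then lie within graph distance $r = O(\sqrt{n})$ of the leader's cell, which I denote $p$. The final configuration must be a contracted straight line occupying $n$ consecutive cells $v_0,v_1,\dots,v_{n-1}$ along one of the six grid directions.

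First I would reduce the work bound to a total displacement bound. By the convention stated in the excerpt, a particle always contracts out of its tail, so only expansions move a particle's head, and each expansion moves the head by exactly one edge of $\Geqt$. Hence the number of expansions performed by any individual particle is at least the graph distance between its initial and final head positions, and in particular the total work is at least $\sum_{q} \mathrm{dist}(q_\text{init},q_\text{fin})$, which is bounded below by the earth-mover distance between the multiset $P_0$ of initial cells and the multiset $P_f=\{v_0,\dots,v_{n-1}\}$ of final cells under any bijection.

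Next I would lower-bound that earth-mover distance. Since every cell of $P_0$ is within distance $r$ of $p$, the triangle inequality gives $\mathrm{dist}(x,v_k)\ge \mathrm{dist}(p,v_k)-r$ for every $x\in P_0$ and every $k$, so
\[
  \sum_{q\in P_0}\mathrm{dist}(q,\sigma(q))\ \ge\ \sum_{k=0}^{n-1}\mathrm{dist}(p,v_k)\ -\ nr
\]
for any bijection $\sigma:P_0\to P_f$. Because $v_0,\dots,v_{n-1}$ lie consecutively along a grid direction, the triangle inequality also yields $\mathrm{dist}(p,v_k)\ge |k-s|$, where $s=\mathrm{dist}(p,v_0)$. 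The sum $\sum_{k=0}^{n-1}|k-s|$ is $\Omega(n^2)$ for every $s\ge 0$ (it is minimized near $s=(n-1)/2$ with value $\Theta(n^2)$ and only grows when $s$ lies outside $[0,n-1]$), while $nr = O(n^{3/2})$. Combining the two bounds gives total work $\Omega(n^2)$.

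The main (mild) obstacle is confirming that handovers cannot cheat the displacement bound. A handover consists of one expansion by one particle and one contraction by another; the expansion primitive is still charged to whichever particle gains the new cell and moves its head by one edge, and the contraction is, by assumption, out of the tail and leaves the other particle's head in place. Thus the per-particle bookkeeping ``number of expansions $\ge$ displacement of the head'' survives handovers unchanged, and the transportation argument above applies verbatim.
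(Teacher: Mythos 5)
Your proof is correct and takes essentially the same route as the paper's: a compact initial configuration of diameter $O(\sqrt{n})$ must be rearranged into a line of diameter $n-1$, the summed distances from the initial ball to the $n$ consecutive line cells are $\Omega(n^2)$, and work is at least total displacement; your earth-mover formalization is just a more careful write-up of the paper's sketch. The only caveat is that the ``contract out of the tail'' convention you invoke is a property of the paper's specific algorithm rather than of arbitrary algorithms, but you do not actually need it: any single expansion or contraction changes a particle's occupied set by one adjacent node, so the number of movements of a particle is at least its net displacement for every algorithm, and your transportation argument goes through unchanged.
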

\begin{proof}
Assume the initial configuration of the set of particles forms a connected structure of diameter at most $2\sqrt{n}+2$ (e.g., if it forms a hexagonal or square shape in $\Geqt$). Since the line has diameter equal to $n-1$, there must exist some particle that will need to traverse a distance of at least $\frac 1 2 n-2\sqrt{n}-3$,  a second particle that will traverse a distance of $n-2\sqrt{n}-4$, etc., irrespective of the algorithm used. The number of particle movements incurred will be at least $\sum_{i=1}^{\frac 1 2 n-2\sqrt{n}-1} (\frac 1 2 n-2\sqrt{n}-i-2)=\Theta(n^2)$.
 \qed
\end{proof}

In the following, we will show a matching upper bound:
\begin{theorem}
Algorithm~\ref{alg:spanningForestAlgorithm} terminates in $\mathcal{O}(n^2)$ work.
\label{thm:work}
\end{theorem}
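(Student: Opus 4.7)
The plan is to bound the total work by separately accounting for the expansions and contractions each particle performs during its root phase and during its follower phase. As established in the proof of Theorem~\ref{thm:solve}, every particle passes through the monotone state sequence idle $\to$ follower (possibly skipped) $\to$ root $\to$ retired exactly once, and the idle and retired states contribute no operations, so it suffices to bound both phase contributions appropriately.

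For the root phase, I would argue that once particle $p$ becomes a root it traverses the boundary of the currently retired structure clockwise in the direction given by \textsc{LineDir} until it lands on a valid line-extension cell and retires. Since at most $n$ particles are retired at any point in time and they form a connected subgraph, the boundary of this structure in $\Geqt$ has length $O(n)$, and $p$ traverses a contiguous arc of this boundary in a single direction. Therefore $p$ performs $O(n)$ expansions and contractions as a root, and summing over all $n$ particles gives $O(n^2)$ root-phase work.

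For the follower phase I would use an amortized argument that charges follower movements to ancestor contractions. Each movement of a follower $p$ is a handover with its parent, and Lemma~\ref{lem:successor} ensures the parent pointer always identifies an active neighbor, so $p$ expands exactly into the cell just vacated by its parent's contraction. Walking up the parent chain in the spanning forest of Lemma~\ref{lem:forest} from $p$ at the moment of the handover reaches a root, and Lemma~\ref{lem:contract} guarantees the cascade of handovers triggered by a single root contraction terminates after advancing each affected particle by exactly one cell. This charging bounds the aggregate follower-phase work by a constant factor times the aggregate root-phase work, yielding another $O(n^2)$ contribution. Combined with Lemma~\ref{lem:worstCaseLine}, this matches the lower bound and closes the proof.

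The main obstacle is making the follower-phase amortization rigorous, because the spanning forest evolves over time: when a root retires its children are promoted to new roots, and parent flags can be re-pointed during handovers not involving the follower itself. I would handle this by defining a potential function such as $\Phi = \sum_{p \text{ active}} \mathrm{dist}(p, \text{nearest valid extension cell along the growing line})$ and verifying by case analysis using Lemmas~\ref{lem:successor}--\ref{lem:contract} that every isolated expansion or contraction leaves $\Phi$ nonincreasing while every handover decreases $\Phi$ by at least a constant. Since the active substructure has diameter $O(n)$ and contains at most $n$ particles, $\Phi \leq O(n^2)$ initially, and this caps the total number of operations at $O(n^2)$.
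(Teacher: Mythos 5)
Your root-phase bound is essentially the paper's: each root travels clockwise around a retired structure of size at most $n$, hence $\mathcal{O}(n)$ movements per root and $\mathcal{O}(n^2)$ in total. The gap is in the follower phase. The claim that charging follower movements to ancestor contractions ``bounds the aggregate follower-phase work by a constant factor times the aggregate root-phase work'' is false: a single root contraction initiates a cascade of handovers that advances \emph{every} particle in the branch behind it by one cell, so one root contraction absorbs up to $\Theta(n)$ charges. Concretely, a single root dragging a chain of $n-1$ followers performs $\Theta(n)$ movements while the followers collectively perform $\Theta(n^2)$; the follower work is a factor $\Theta(n)$ larger than the root work, not a constant factor, and applying your cascade bound to all $\Theta(n^2)$ root contractions only yields $\mathcal{O}(n^3)$. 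The fallback potential $\Phi=\sum_{p\,\text{active}}\mathrm{dist}(p,\cdot)$ does not repair this: followers move along spanning-forest paths that are not geodesics in \Geqt{} toward the line, so a handover need not decrease the $\Geqt$-distance of the expanding child; roots move clockwise around the structure rather than toward the nearest valid extension cell, so an isolated root expansion can increase $\Phi$; and the valid extension cells recede as the line grows, so retirements can increase $\Phi$ as well. None of the three monotonicity claims you need can be verified as stated.

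The paper avoids amortization entirely and proves a \emph{per-particle} bound of $\mathcal{O}(n)$ movements in the follower phase: fix the first configuration $c$ in which $p$ becomes a follower, and let $P=(a_0,\dots,a_m)$ be the directed path in the forest $A(c)$ from the head of $p$ to its root. By Lemma~\ref{lem:forest} this path exists and has length at most $2n$, and by the handover rules of Algorithm~\ref{alg:spanningForestAlgorithm} (together with Lemma~\ref{lem:successor}, which keeps $p.parent$ pointing at the particle ahead of it on this path) the particle $p$ simply expands successively into $a_0,a_1,\dots$ until it touches a retired particle and is promoted to root. Each such expansion pairs with one contraction, giving $\mathcal{O}(n)$ follower movements per particle and hence $\mathcal{O}(n^2)$ overall. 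If you want to keep an aggregate argument, you must either prove this fixed-path property anyway or carefully sum (root steps)$\times$(tree size) over the evolving, merging and splitting trees; the direct per-particle path bound is the cleaner route.
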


\begin{proof}
To prove the upper bound, we simply show that every particle executes $\mathcal{O}(n)$ movements. The theorem then follows. % by Lemma~\ref{lem:worstCaseLine}.
    Consider a particle $p$.
    While $p$ is in an idle or a retired state, it does not move.
    Let $c$ be the first configuration when $p$ becomes a follower.
     Consider the directed path in $A(c)$ from the head of $p$ to its root $p'$. There always is a such a path since every follower belongs to a tree in $A(c)$ by
    Lemma~\ref{lem:forest}.
    %, where the root has become root due to touching the a retired particle.
    Let $P= (a_0, a_1, \ldots, a_m)$ be that path in $A(c)$ where $a_0$ is the head of $p$ and $a_m$ is a child of  $p'$. According to Algorithm~\ref{alg:spanningForestAlgorithm}, $p$ will follow $P$ by sequentially expanding into the nodes $a_0, a_1, \ldots, a_m$. The length of this path is bounded by $2n$ and, therefore, the number of movements $p$ executes while being a follower is $\mathcal{O}(n)$.
     Once $p$ becomes a root, it only performs expansions and contractions around the currently constructed line structure, %out of the leader and retired roots,
     until it reaches one of the valid positions on the line. Since the total number of retired particles is at most $n$, this leads to an additional $\mathcal{O}(n)$ movements by $p$.
     Therefore, the number of movements a particle $p$ executes is $\mathcal{O}(n)$, which concludes the theorem.
     \qed
\end{proof}

\section{Self-stabilizing Leader Election and Shape Formation}
\label{sec:ss}
Consider the variant of the geometric amoebot model in which faults can occur
that arbitrarily corrupt the local memory of a particle. Recall that for an
algorithm to solve the leader election problem in a self-stabilizing manner,
it has to satisfy the following requirements: First, from any initial system
state (in which the particle structure is connected) the particle system
eventually reaches a final system state while preserving connectivity, i.e.,
eventually a unique leader will be established. Second, once a final system
state is reached, the system has to remain in that state as long as no faults
occur. Analogous requirements have to be satisfied for self-stabilizing shape
formation.

Our leader election algorithm can be extended to a self-stabilizing leader
election algorithm with $\mathcal{O}(\log^* n)$ memory using the results
of~\cite{DBLP:conf/podc/AwerbuchO94,DBLP:conf/focs/ItkisL94} (i.e., we use
their self-stabilizing reset algorithm on every cycle in order to recover from
failure states). However, it is not possible to design a self-stabilizing
algorithm for the line formation. The reason for this is that even a much
simpler problem called \emph{movement problem} cannot be solved in a
self-stabilizing manner. It is easy to see that if the movement problem cannot
be solved in a self-stabilizing manner, then also the line formation problem
cannot be solved in a self-stabilizing manner.

In the movement problem we are given an initial distribution $A$ of particles
that can be in a contracted as well as expanded state, and the goal is to
change the set of nodes occupied by the particles without causing
disconnectivity. For the ring of expanded particles it holds that for any
protocol $P$ there is an initial state so that $P$ does not solve the movement
problem. To show this we consider two cases: suppose that there is any state
$s$ for some particle in the ring that would cause that particle to contract.
In this case set two particles on opposite sides of the ring to that state,
and the ring will break due to their contractions. Otherwise, $P$ would not
move any particle of the ring, so also in this case it would not solve the
movement problem in a self-stabilizing manner.

\section{Conclusion}
The algorithms presented for the geometric amoebot model can be extended for the case that $G$ is a different regular grid graph embedded in the two-dimensional Euclidean plane.
As future work, we would like to identify the minimum set of key geometric
properties that $G$ must have in order for the proposed algorithms to work.

\section*{Acknowledgment} We would like to thank John Reif for the helpful discussions that led to the realization of our algorithm with high probability guarantees.
\bibliographystyle{plain}
\bibliography{literature}

\begin{thebibliography}{10}

\bibitem{Adl94}
L.~M. Adleman.
\newblock Molecular computation of solutions to combinatorial problems.
\newblock {\em Science}, 266(11):1021--1024, 1994.

\bibitem{AG3}
Chrysovalandis Agathangelou, Chryssis Georgiou, and Marios Mavronicolas.
\newblock A distributed algorithm for gathering many fat mobile robots in the
  plane.
\newblock In {\em Proceedings of the 2013 ACM symposium on Principles of
  distributed computing}, pages 250--259. ACM, 2013.

\bibitem{AE07}
R.~Ananthakrishnan and A.~Ehrlicher.
\newblock The forces behind cell movement.
\newblock {\em International Journal of Biological Sciences}, 3(5):303--317,
  2007.

\bibitem{AAD+06}
D.~Angluin, J.~Aspnes, Z.~Diamadi, M.~J. Fischer, and R.~Peralta.
\newblock Computation in networks of passively mobile finite-state sensors.
\newblock {\em Distributed Computing}, 18(4):235--253, 2006.

\bibitem{AR10}
D.~Arbuckle and A.~Requicha.
\newblock Self-assembly and self-repair of arbitrary shapes by a swarm of
  reactive robots: algorithms and simulations.
\newblock {\em Autonomous Robots}, 28(2):197--211, 2010.

\bibitem{DBLP:conf/podc/AwerbuchO94}
Baruch Awerbuch and Rafail Ostrovsky.
\newblock Memory-efficient and self-stabilizing network \{RESET\} (extended
  abstract).
\newblock In {\em Proceedings of the Thirteenth Annual {ACM} Symposium on
  Principles of Distributed Computing, Los Angeles, California, USA, August
  14-17, 1994}, pages 254--263, 1994.

\bibitem{BFMS11}
L.~Barriere, P.~Flocchini, E.~Mesa-Barrameda, and N.~Santoro.
\newblock Uniform scattering of autonomous mobile robots in a grid.
\newblock {\em Int. Journal of Foundations of Computer Science},
  22(3):679--697, 2011.

\bibitem{DBLP:1211-1909}
A.~Bhattacharyya, M.~Braverman, B.~Chazelle, and H.L. Nguyen.
\newblock On the convergence of the hegselmann-krause system.
\newblock {\em CoRR}, abs/1211.1909, 2012.

\bibitem{BDLS96}
D.~Boneh, C.~Dunworth, R.~J. Lipton, and J.~Sgall.
\newblock On the computational power of {DNA}.
\newblock {\em Discrete Applied Mathematics}, 71:79--94, 1996.

\bibitem{BMV12}
V.~Bonifaci, K.~Mehlhorn, and G.~Varma.
\newblock Physarum can compute shortest paths.
\newblock In {\em Proceedings of SODA '12}, pages 233--240, 2012.

\bibitem{BKRT04}
Z.~J. Butler, K.~Kotay, D.~Rus, and K.~Tomita.
\newblock Generic decentralized control for lattice-based self-reconfigurable
  robots.
\newblock {\em International Journal of Robotics Research}, 23(9):919--937,
  2004.

\bibitem{Cha09}
B.~Chazelle.
\newblock Natural algorithms.
\newblock In {\em Proc. of ACM-SIAM SODA}, pages 422--431, 2009.

\bibitem{chen2014fast}
Ho-Lin Chen, David Doty, Dhiraj Holden, Chris Thachuk, Damien Woods, and
  Chun-Tao Yang.
\newblock Fast algorithmic self-assembly of simple shapes using random
  agitation.
\newblock In {\em DNA Computing and Molecular Programming}, pages 20--36.
  Springer, 2014.

\bibitem{chen2013parallel}
Moya Chen, Doris Xin, and Damien Woods.
\newblock Parallel computation using active self-assembly.
\newblock In {\em DNA Computing and Molecular Programming}, pages 16--30.
  Springer, 2013.

\bibitem{CDBG11}
K.~C. Cheung, E.~D. Demaine, J.~R. Bachrach, and S.~Griffith.
\newblock Programmable assembly with universally foldable strings (moteins).
\newblock {\em IEEE Transactions on Robotics}, 27(4):718--729, 2011.

\bibitem{Chi94}
G.~Chirikjian.
\newblock Kinematics of a metamorphic robotic system.
\newblock In {\em Proceedings of ICRA '94}, volume~1, pages 449--455, 1994.

\bibitem{ci12}
Mark Cieliebak, Paola Flocchini, Giuseppe Prencipe, and Nicola Santoro.
\newblock Distributed computing by mobile robots: Gathering.
\newblock {\em SIAM Journal on Computing}, 41(4):829--879, 2012.

\bibitem{CP08}
R.~Cohen and D.~Peleg.
\newblock Local spreading algorithms for autonomous robot systems.
\newblock {\em Theoretical Computer Science}, 399(1-2):71--82, 2008.

\bibitem{DFSY10}
S.~Das, P.~Flocchini, N.~Santoro, and M.~Yamashita.
\newblock On the computational power of oblivious robots: forming a series of
  geometric patterns.
\newblock In {\em Proceedings of 29th ACM Symposium on Principles of
  Distributed Computing (PODC)}, 2010.

\bibitem{DS08}
X.~Defago and S.~Souissi.
\newblock Non-uniform circle formation algorithm for oblivious mobile robots
  with convergence toward uniformity.
\newblock {\em Theoretical Computer Science}, 396(1-3):97--112, 2008.

\bibitem{DPSS11}
E.~D. Demaine, M.~J. Patitz, R.~T. Schweller, and S.~M. Summers.
\newblock Self-assembly of arbitrary shapes using rnase enzymes: Meeting the
  kolmogorov bound with small scale factor (extended abstract).
\newblock In {\em Proceedings of STACS '11}, pages 201--212, 2011.

\bibitem{spaa-ba14}
Zahra Derakhshandeh, Shlomi Dolev, Robert Gmyr, Andr{\'e}a~W. Richa, Christian
  Scheideler, and Thim Strothmann.
\newblock Brief announcement: Amoebot --- a new model for programmable matter.
\newblock In {\em SPAA}, 2014.

\bibitem{doty2012}
David Doty.
\newblock Theory of algorithmic self-assembly.
\newblock {\em Communications of the ACM}, 55(12):78--88, 2012.

\bibitem{fl13}
Paola Flocchini, David Ilcinkas, Andrzej Pelc, and Nicola Santoro.
\newblock Computing without communicating: Ring exploration by asynchronous
  oblivious robots.
\newblock {\em Algorithmica}, 65(3):562--583, 2013.

\bibitem{fl08}
Paola Flocchini, Giuseppe Prencipe, Nicola Santoro, and Peter Widmayer.
\newblock Arbitrary pattern formation by asynchronous, anonymous, oblivious
  robots.
\newblock {\em Theoretical Computer Science}, 407(1):412--447, 2008.

\bibitem{FNKB88}
T.~Fukuda, S.~Nakagawa, Y.~Kawauchi, and M.~Buss.
\newblock Self organizing robots based on cell structures - cebot.
\newblock In {\em Proceedings of IROS '88}, pages 145--150, 1988.

\bibitem{hendricks2015}
Jacob Hendricks, Matthew~J Patitz, and Trent~A Rogers.
\newblock Replication of arbitrary hole-free shapes via self-assembly with
  signal-passing tiles.
\newblock {\em arXiv preprint arXiv:1503.01244}, 2015.

\bibitem{HABFM02}
T.-R. Hsiang, E.~Arkin, M.~Bender, S.~Fekete, and J.~Mitchell.
\newblock Algorithms for rapidly dispersing robot swarms in unknown
  environments.
\newblock In {\em Proceedings of the 5th Workshop on Algorithmic Foundations of
  Robotics (WAFR)}, pages 77--94, 2002.

\bibitem{DBLP:conf/focs/ItaiR81}
Alon Itai and Michael Rodeh.
\newblock Symmetry breaking in distributive networks.
\newblock In {\em 22nd Annual Symposium on Foundations of Computer Science,
  Nashville, Tennessee, USA, 28-30 October 1981}, pages 150--158, 1981.

\bibitem{DBLP:conf/focs/ItkisL94}
Gene Itkis and Leonid~A. Levin.
\newblock Fast and lean self-stabilizing asynchronous protocols.
\newblock In {\em 35th Annual Symposium on Foundations of Computer Science,
  Santa Fe, New Mexico, USA, 20-22 November 1994}, pages 226--239, 1994.

\bibitem{Ker12}
S.~Kernbach, editor.
\newblock {\em Handbook of Collective Robotics -- Fundamentals and Challanges}.
\newblock Pan Stanford Publishing, 2012.

\bibitem{KM11}
P.~Kling and F.~{Meyer auf der Heide}.
\newblock Convergence of local communication chain strategies via linear
  transformations.
\newblock In {\em Proceedings of the 23rd ACM Symposium on Parallelism in
  Algorithms and Architectures}, pages 159--166, 2011.

\bibitem{LTT+10}
K.~Li, K.~Thomas, C.~Torres, L.~Rossi, and C.-C. Shen.
\newblock Slime mold inspired path formation protocol for wireless sensor
  networks.
\newblock In {\em Proceedings of ANTS '10}, pages 299--311, 2010.

\bibitem{McL08}
J.~McLurkin.
\newblock {\em Analysis and Implementation of Distributed Algorithms for
  Multi-Robot Systems}.
\newblock PhD thesis, Massachusetts Institute of Technology, 2008.

\bibitem{patitz2014}
Matthew~J Patitz.
\newblock An introduction to tile-based self-assembly and a survey of recent
  results.
\newblock {\em Natural Computing}, 13(2):195--224, 2014.

\bibitem{RS10}
M.~Rubenstein and W.~Shen.
\newblock Automatic scalable size selection for the shape of a distributed
  robotic collective.
\newblock In {\em Proc.~of the IEEE/RSJ Intl.~Conf.~on Intelligent Robots and
  Systems (IROS)}, 2010.

\bibitem{kilobots}
Michael Rubenstein, Alejandro Cornejo, and Radhika Nagpal.
\newblock Programmable self-assembly in a thousand-robot swarm.
\newblock {\em Science}, 345(6198):795--799, 2014.

\bibitem{WWA04}
J.~E. Walter, J.~L. Welch, and N.~M. Amato.
\newblock Distributed reconfiguration of metamorphic robot chains.
\newblock {\em Distributed Computing}, 17(2):171--189, 2004.

\bibitem{WLWS98}
E.~Winfree, F.~Liu, L.~A. Wenzler, and N.~C. Seeman.
\newblock Design and self-assembly of two-dimensional dna crystals.
\newblock {\em Nature}, 394(6693):539--544, 1998.

\bibitem{Woods2013intrinsic}
Damien Woods.
\newblock Intrinsic universality and the computational power of self-assembly.
\newblock In Turlough Neary and Matthew Cook, editors, {\em Proceedings
  Machines, Computations and Universality 2013, {MCU} 2013, Z{\"{u}}rich,
  Switzerland, September 9-11, 2013.}, volume 128 of {\em {EPTCS}}, pages
  16--22, 2013.

\bibitem{winfree13}
Damien Woods, Ho-Lin Chen, Scott Goodfriend, Nadine Dabby, Erik Winfree, and
  Peng Yin.
\newblock Active self-assembly of algorithmic shapes and patterns in
  polylogarithmic time.
\newblock In {\em ITCS}, pages 353--354, 2013.

\bibitem{YSS+07}
M.~Yim, W.-M. Shen, B.~Salemi, D.~Rus, M.~Moll, H.~Lipson, E.~Klavins, and
  G.~S. Chirikjian.
\newblock Modular self-reconfigurable robot systems.
\newblock {\em IEEE Robotics Automation Magazine}, 14(1):43--52, 2007.

\end{thebibliography}

\end{document}